\def \VersionLong {}
\def \VersionAuthor {}

\ifdefined \VersionLong
	\newcommand{\LongVersion}[1]{\ifdefined\VersionWithComments{\color{red!40!black}#1}\else#1\fi}
	\newcommand{\ShortVersion}[1]{\ifdefined\VersionWithComments{\color{black!40}#1}\fi}
\else
	\newcommand{\LongVersion}[1]{\ifdefined\VersionWithComments{\color{black!40}#1}\fi}
	\newcommand{\ShortVersion}[1]{\ifdefined\VersionWithComments{\color{red!40!black}#1}\else#1\fi}
\fi

\ifdefined \VersionAuthor
	\newcommand{\AuthorVersion}[1]{#1}
	\newcommand{\IEEEVersion}[1]{}
\else
	\newcommand{\AuthorVersion}[1]{}
	\newcommand{\IEEEVersion}[1]{#1}
\fi

\documentclass[10pt, conference, compsocconf]{IEEEtran}
\IEEEoverridecommandlockouts

\usepackage[utf8]{inputenc}
\usepackage[english]{babel}

\usepackage{graphicx}
\graphicspath{{figures/}}

\usepackage[ruled,vlined,linesnumbered]{algorithm2e}
	\SetKwInOut{Input}{input}
	\SetKwInOut{Output}{output}

\usepackage{subcaption}

\usepackage{paralist} 

\newenvironment{ienumeration}
	{\ifdefined\VersionLong\begin{enumerate}\else\begin{inparaenum}[\itshape i\upshape)]\fi}
	{\ifdefined\VersionLong\end{enumerate}\else\end{inparaenum}\fi}


\usepackage{amsmath} 
\usepackage{amssymb} 

\ifdefined\VersionWithComments
	\usepackage{draftwatermark}
	\SetWatermarkText{draft}
	\SetWatermarkScale{3}
	\SetWatermarkColor[gray]{0.9}
\fi

\usepackage[svgnames,table]{xcolor}
\definecolor{darkblue}{rgb}{0.0,0.0,0.6}
\definecolor{darkgreen}{rgb}{0, 0.5, 0}
\definecolor{darkpurple}{rgb}{0.7, 0, 0.7}
\definecolor{darkblue}{rgb}{0, 0, 0.7}

\AuthorVersion{
}
	
\usepackage[capitalise,english,nameinlink]{cleveref} 
\crefname{line}{\text{line}}{\text{lines}} 
\crefname{assumption}{\text{Assumption}}{\text{Assumptions}} 


\usepackage{tikz}
\usetikzlibrary{arrows,automata,positioning}
\tikzstyle{every node}=[initial text=]
\tikzstyle{location}=[rectangle, rounded corners, minimum size=12pt, draw=black, fill=blue!10, inner sep=2pt]
\tikzstyle{invariant}=[draw=black, dotted, inner sep=1pt] 
\tikzstyle{final}=[double]
\tikzstyle{accepting}=[final]
\tikzstyle{PTPMOPT}=[,dashed,color=red,semithick]

\newcommand{\styleact}[1]{\ensuremath{\textcolor{coloract}{\mathrm{#1}}}}
\newcommand{\styleclock}[1]{\ensuremath{\textcolor{colorclock}{#1}}}

\newcommand{\styleparam}[1]{\ensuremath{\textcolor{colorparam}{#1}}}

\definecolor{coloract}{rgb}{0.50, 0.70, 0.30}
\definecolor{colorclock}{rgb}{0.4, 0.4, 1}
\definecolor{colorconst}{rgb}{0.50, 0.20, 0.00}
\definecolor{colordisc}{rgb}{1, 0, 1}
\definecolor{colorloc}{rgb}{0.4, 0.4, 0.65}
\definecolor{colorparam}{rgb}{1, 0.6, 0.0}

\newcommand{\init}{_0}

\newcommand{\A}{\ensuremath{\mathcal{A}}}
\newcommand{\Actions}{\Sigma}
\newcommand{\action}{\ensuremath{a}}
\newcommand{\ActionsIndices}{\zeta}
\newcommand{\assign}{\leftarrow}
\newcommand{\BranchingCard}{B}

\newcommand{\Clock}{\mathbb{X}} 
\newcommand{\ClockCard}{H} 
\newcommand{\clock}{x} 
\newcommand{\clockabs}{\ensuremath{x_\mathit{abs}}} 
\newcommand{\clockval}{\nu} 
\newcommand{\ClocksZero}{\vec{0}}
\newcommand{\compOp}{\bowtie}
\newcommand{\CTrue}{\mathbf{true}}

\newcommand{\edge}{e}
\newcommand{\Edges}{E}
\newcommand{\longuefleche}[1]{\stackrel{#1}{\longrightarrow}}
\newcommand{\longueflecheRel}[1]{\stackrel{#1}{\mapsto}}

\newcommand{\flecheRel}{{\rightarrow}}

\newcommand{\grandn}{{\mathbb N}}
\newcommand{\grandq}{{\mathbb Q}}
\newcommand{\grandqplus}{\grandq_{+}} 

\newcommand{\guard}{g}

\newcommand{\invariant}{I}
\newcommand{\K}{K}

\newcommand{\Lg}{\mathcal{L}}
\newcommand{\loc}{l} 
\newcommand{\locinit}{\loc\init}
\newcommand{\Loc}{L} 
\newcommand{\LocFinal}{F}

\newcommand{\Param}{\mathbb{P}} 
\newcommand{\param}{p} 
\newcommand{\ParamCard}{M} 
\newcommand{\pval}{v} 

\newcommand{\R}{{\mathbb{R}}}
\newcommand{\Rgeqzero}{\R_{\geq 0}}
\newcommand{\Rgzero}{{\R_{>0}}}
\newcommand{\sinit}{s\init} 
\newcommand{\somelocs}{T} 
\newcommand{\state}{\ensuremath{s}} 
\newcommand{\States}{S} 
\newcommand{\word}{\textcolor{colorok}{w}}

\newcommand{\wloc}{w}

\newcommand{\resets}{R}

\newcommand{\reset}[2]{\ensuremath{[#1]_{#2}}}
\newcommand{\valuate}[2]{\ensuremath{#2(#1)}}

\newcommand{\pat}{\mathit{pat}}
\newcommand{\str}{w}
\newcommand{\Zp}{{\mathbb{Z}_{>0}}}

\usepackage{pifont}
\newcommand{\cmark}{\ding{51}}%
%

\newcommand{\defProblem}[3]
{%
\noindent\fcolorbox{black}{blue!15}{
%
%
	\begin{minipage}{.95\columnwidth}
		\textbf{#1 problem:}\\
		\textsc{Input}: #2\\
		\textsc{Problem}: #3
	\end{minipage}
}
%
%
}

\newcommand{\cellHeader}[1]{\cellcolor{blue!20}\textbf{#1}}
\newcommand{\multiCellHeader}[2]{\multicolumn{#1}{c|}{\cellcolor{blue!20}\textbf{#2}}}
\newcommand{\startMultiCellHeader}[2]{\multicolumn{#1}{|c|}{\cellcolor{blue!20}\textbf{#2}}}

\usepackage{amsthm}
%
%
%

\theoremstyle{plain}
\newtheorem{lemma}{Lemma}

\newtheorem{assumption}{Assumption}

\theoremstyle{definition}
\newtheorem{definition}{Definition}
\newtheorem{example}{Example}

\theoremstyle{remark}

\ifdefined \VersionWithComments
	\usepackage{marginnote}
	\newcommand{\marginX}{\marginnote{\huge{\quad\quad\textbf{!}\quad\quad}}}
	\newcommand{\ea}[1]{\mbox{}{\color{blue}\marginX{}\textbf{[\'Etienne}: #1]}}
	\newcommand{\ih}[1]{\mbox{}{\color{purple}\marginX{}\textbf{[Ichiro}: #1]}}
	\newcommand{\mw}[1]{\mbox{}{\color{orange}\marginX{}\textbf{[Masaki}: #1]}}
	\newcommand{\instructions}[1]{{\color{red}\marginX{}\textbf{[Instructions: ``#1'']}}}
	\newcommand{\reviewer}[2]{\mbox{}{\color{red}\marginX{}\textbf{[Reviewer #1}: ``#2'']}}
	\newcommand{\todo}[1]{\mbox{}{\color{red}{\marginX{}\textbf{TODO}\ifx#1\\\else:\ \fi #1}}} 
\else
	\newcommand{\instructions}[1]{}
	\newcommand{\ea}[1]{}
	\newcommand{\ih}[1]{}
	\newcommand{\mw}[1]{}
	\newcommand{\reviewer}[2]{}
	\newcommand{\todo}[1]{}
\fi

\ifdefined\VersionWithComments
	\usepackage[pagewise,switch]{lineno} 
	\linenumbers
	
\fi

\newcommand{\stylealgo}[1]{\ensuremath{\textsf{#1}}}
\newcommand{\EFsynth}{\stylealgo{EFsynth}}
\newcommand{\PTPM}{\stylealgo{PTPM}}
\newcommand{\PTPMopt}{\ensuremath{\stylealgo{\PTPM}_{\textsf{opt}}}}
\newcommand{\TransPattern}{\stylealgo{MakeSymbolic}}
\newcommand{\TransWord}{\stylealgo{TW2PTA}}

\newcommand{\imitator}{\textsf{IMITATOR}}


\ifdefined \VersionWithComments
 	\definecolor{colorok}{RGB}{80,80,150}
\else
	\definecolor{colorok}{RGB}{0,0,0}
\fi

\newcommand{\eg}{\textcolor{colorok}{e.\,g.,}\xspace}
\newcommand{\ie}{\textcolor{colorok}{i.\,e.,}\xspace}
\newcommand{\st}{\textcolor{colorok}{s.t.}\xspace}
\newcommand{\viz}{\textcolor{colorok}{viz.,}\xspace}

\title{Offline timed pattern matching under uncertainty\ifdefined \VersionWithComments
	\textcolor{red}{{ [Version: \today{}]}}
\fi\thanks{%
	This work is partially supported by the ANR national research program PACS (ANR-14-CE28-0002),
	by
	JST ERATO HASUO Metamathematics for Systems Design Project (No.\ JPMJER1603),
        and by JSPS Grants-in-Aid No.\ 15KT0012 \& 18J22498.
}}

\author{\IEEEauthorblockN{Étienne André}
\IEEEauthorblockA{\textit{Université Paris 13, LIPN, CNRS, UMR 7030} \\
F-93430, Villetaneuse, France
}
\and
\IEEEauthorblockN{Ichiro Hasuo}
\IEEEauthorblockA{%
\textit{National Institute of Informatics}\\
Sokendai (The Graduate University\\for Advanced Studies)\\
Tokyo, Japan
}
\and
\IEEEauthorblockN{Masaki Waga}
\IEEEauthorblockA{%
\textit{National Institute of Informatics}\\Tokyo, Japan\\
Sokendai (The Graduate University\\for Advanced Studies)\\Kanagawa, Japan\\
JSPS Research Fellow
}
}

\AuthorVersion{
	\newcommand\copyrighttext{%
	\footnotesize
		This is the author version of the manuscript of the same name published in the proceedings of the 23rd International Conference on Engineering of Complex Computer Systems (ICECCS 2018).
	The final version is available at \href{http://www.dx.doi.org/10.1109/ICECCS2018.2018.00010}{10.1109/ICECCS2018.2018.00010}.
	}
	\newcommand\copyrightnotice{%
	\begin{tikzpicture}[remember picture,overlay]
	\node[anchor=south,yshift=10pt] at (current page.south) {\fbox{\parbox{\dimexpr\textwidth-\fboxsep-\fboxrule\relax}{\copyrighttext}}};
	\end{tikzpicture}%
	}
}

\begin{document}

\pagestyle{plain}

\maketitle
\AuthorVersion{
	\copyrightnotice
}

\thispagestyle{plain}

\ifdefined \VersionWithComments
	\textcolor{red}{\textbf{This is the version with comments. To disable comments, comment out line~3 in the \LaTeX{} source.}}
\fi

\begin{abstract}
	Given a log and a specification, timed pattern matching aims at exhibiting for which start and end dates a specification holds on that log.
	For example, ``a given action is always followed by another action before a given deadline''.
	This problem has strong connections with \emph{monitoring} real-time systems.
	We address here timed pattern matching in presence of an \emph{uncertain} specification, \ie{} that may contain timing parameters (\eg{} the deadline can be uncertain or unknown).
	That is, we want to know for which start and end dates, and for what values of the deadline, this property holds.
	Or what is the minimum or maximum deadline (together with the corresponding start and end dates) for which this property holds.
	We propose here a framework for timed pattern matching based on parametric timed model checking.
	In contrast to most parametric timed problems, the solution is effectively computable, and we perform experiments using \imitator{} to show the applicability of our approach.
\end{abstract}

\begin{IEEEkeywords}
	monitoring; real-time systems; parametric timed automata;
\end{IEEEkeywords}

\instructions{Submissions should not exceed 15 pages in length (not including the bibliography, which is thus not restricted), but may be supplemented with a clearly marked appendix, which will be reviewed at the discretion of the program committee.}

\todo{we need to cite \cite{DFM13,AMNU17,BFMU17}}

\ea{perhaps cite later:
\begin{itemize}
	\item Deshmukh, J.V., Majumdar, R., Prabhu, V.S.: Quantifying conformance using the
skorokhod metric. In: Kroening, D., Păsăreanu, C.S. (eds.) CAV 2015. LNCS, vol.
9207, pp. 234–250. Springer, Cham (2015). doi:\url{10.1007/978-3-319-21668-3_14}
	\item Jakšić, S., Bartocci, E., Grosu, R., Ničković, D.: Quantitative monitoring of STL
with edit distance. In: Falcone, Y., Sánchez, C. (eds.) RV 2016. LNCS, vol. 10012,
pp. 201–218. Springer, Cham (2016). doi:\url{10.1007/978-3-319-46982-9_13}
\end{itemize}

}

\ea{hello}
\ih{hello}
\mw{hello}

\section{Introduction}\label{section:introduction}

Real-time systems are increasingly pervasive in human activities, with systems becoming more and more complex.
Monitoring real-time systems consists in deciding whether a log satisfies a specification.
Often, we are rather interested in knowing \emph{for which segment} of the log the specification is satisfied or violated.
This problem can be related to string matching and pattern matching.
The \emph{timed pattern matching problem} was recently formulated in~\cite{UFAM14}, with subsequent works varying the setting and improving the technique (\eg{} \cite{UFAM16,WAH16,WHS17}).
In~\cite{UFAM14}, the problem takes as input a timed signal $\word$ (values that change over the continuous notion of time) and a timed regular expression $R$ (a real-time extension of regular expressions); and it returns the set of intervals $[t, t']$ for which the timed regular expression is matched by the log, \ie{} $\word$ restricted to $[t,t']$ belongs to the language of~$R$.

In~\cite{WAH16,WHS17}, we introduced a solution to the timed pattern matching problem where the log is given in the form of a timed word (a sequence of events with their associated timestamps), and the specification in the form of a timed automaton (TA), an extension of finite-state automata with clocks~\cite{AD94}.
In~\cite{WAH16}, our technique first relied on an extension of the Boyer-Moore algorithm for string matching and its extension to (untimed) pattern matching by Watson and Watson;
in~\cite{WHS17}, it relied on (an automata-theoretic extension of) skip values from the Franek--Jennings--Smyth (FJS) algorithm for string matching~\cite{FJS07}, so as to improve the efficiency.

As a motivating example, consider the example in \cref{figure:example}.
Here \$ is a special terminal character.
Consider the automaton in \cref{figure:example:PTA}, and fix $\param_1 = 1$ and $\param_2 = 1$---which gives a timed automaton~\cite{AD94}.
For this timed automaton (say~$\A$) and the target timed word $\word$ in \cref{figure:example:word}, the output of the timed pattern matching problem is the set of matching intervals $\{(t,t') \mid \word|_{(t,t')} \in \Lg(\A) \} = \{(t,t') \mid t \in (3.7,3.9), t' \in [6.0,\infty)\}$.

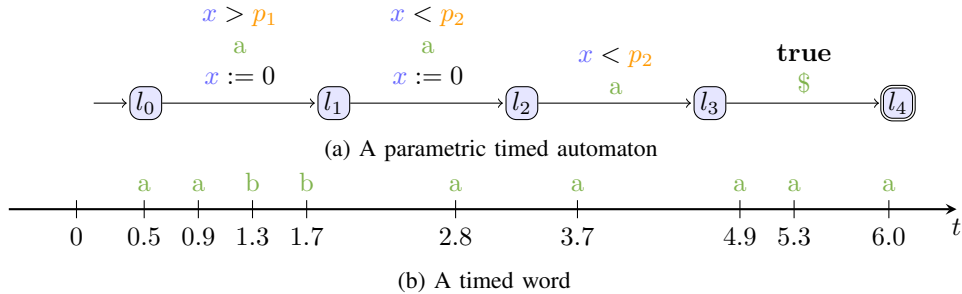
\begin{figure*}[t]
	\begin{subfigure}[b]{\textwidth}
	\centering
		\scalebox{1}{
		\begin{tikzpicture}[shorten >=1pt,node distance=2.5cm,on grid,auto]
		\node[location,initial] (s_0) {$\loc_0$};
		\node[location] (s_1) [right of=s_0] {$\loc_1$};
		\node[location] (s_2) [right of=s_1] {$\loc_2$};
		\node[location] (s_3) [right of=s_2]{$\loc_3$};
		\node[location,accepting] (s_4) [right of=s_3]{$\loc_4$};

		\path[->] 
		(s_0) edge [above] node {\begin{tabular}{c}
									$\styleclock{x} > \styleparam{\param_1}$\\
									$\styleact{a}$\\
									$\styleclock{x} := 0$
								\end{tabular}} (s_1)
		(s_1) edge [above] node {\begin{tabular}{c}
									$\styleclock{x} < \styleparam{\styleparam{\param_2}}$\\
									$\styleact{a}$\\
									$\styleclock{x} := 0$
								\end{tabular}} (s_2)
		(s_2) edge [above] node[align=center] {$\styleclock{x} < \styleparam{\styleparam{\param_2}}$\\$\styleact{a}$} (s_3)
		(s_3) edge [above] node[align=center] {$\CTrue$\\$\styleact{\$}$} (s_4);
		\end{tikzpicture}}
	\caption{A parametric timed automaton}
	\label{figure:example:PTA}
	\end{subfigure}
	
	\begin{subfigure}[b]{0.99\textwidth}
	\centering
	\scalebox{1}{
	\begin{tikzpicture}[scale=1.2,xscale=1.5]
	\draw [thick, -stealth](-0.5,0)--(6.5,0) node [anchor=north]{$t$};
	\draw (0,0.1) -- (0,-0.1) node [anchor=north]{$0$};

	\draw (0.5,0.1) node[anchor=south]{$\styleact{a}$} -- (0.5,-0.1) node[anchor=north]{$0.5$};
	\draw (0.9,0.1) node[anchor=south]{$\styleact{a}$} -- (0.9,-0.1) node[anchor=north]{$0.9$};
	\draw (1.3,0.1) node[anchor=south]{$\styleact{b}$} -- (1.3,-0.1) node[anchor=north]{$1.3$};
	\draw (1.7,0.1) node[anchor=south]{$\styleact{b}$} -- (1.7,-0.1) node[anchor=north]{$1.7$};
	\draw (2.8,0.1) node[anchor=south]{$\styleact{a}$} -- (2.8,-0.1) node[anchor=north]{$2.8$};
	\draw (3.7,0.1) node[anchor=south]{$\styleact{a}$} -- (3.7,-0.1) node[anchor=north]{$3.7$};
	\draw (5.3,0.1) node[anchor=south]{$\styleact{a}$} -- (5.3,-0.1) node[anchor=north]{$5.3$};
	\draw (4.9,0.1) node[anchor=south]{$\styleact{a}$} -- (4.9,-0.1) node[anchor=north]{$4.9$};
	\draw (6.0,0.1) node[anchor=south]{$\styleact{a}$} -- (6.0,-0.1) node[anchor=north]{$6.0$};
	\end{tikzpicture}}
	\caption{A timed word}
	\label{figure:example:word}
	\end{subfigure}

	\caption{An example of parametric timed pattern matching}
	\label{figure:example}
\end{figure*}

\paragraph{Contribution}
In this work, we consider a more abstract problem:
given a (concrete) timed log and an incomplete specification where some of the timing constants may be known with limited precision or completely unknown, what are the intervals and the valuations of the parameters for which the specification holds?
Coming back to \cref{figure:example}, the question becomes to exhibit values for $t, t', \param_1, \param_2$ for which the specification holds on the log, \ie{}
	$\{(t,t',\pval)\mid \word|_{(t,t')}\in \Lg(\valuate{\A}{\pval})\}$, where $\pval$ denotes a valuation of $\param_1, \param_2$ and $\valuate{\A}{\pval}$ denotes the replacement of $\param_1, \param_2$ in~$\A$ with their respective valuation in~$\pval$.

We introduce an approach using as underlying formalisms timed words and parametric timed automata~\cite{AHV93}, two well-known formalisms in the real-time systems community.
We first show that the problem is decidable (which mainly comes from the fact that logs are finite), and we propose a practical solution based on parametric timed model checking.
\LongVersion{

}%
We implement our method using \imitator{}~\cite{AFKS12} and we perform a set of experiments on a set of automotive benchmarks.

\paragraph{Related work}
Several algorithms have been proposed for online monitoring of real-time temporal logic specifications.
Online monitoring consists in monitoring \LongVersion{on-the-fly }at runtime, while offline monitoring can possibly be performed after the execution is completed, with less hard constraints on the monitoring algorithm performance.
An online monitoring algorithm for ptMTL (a past time fragment of MTL~\cite{Koymans90}) was proposed in~\cite{RFB14} and an algorithm for MTL[U,S] (a variant of MTL with both forward and backward temporal modalities) was proposed in~\cite{HOW14}.
In addition, a case study on an autonomous research vehicle monitoring~\cite{KCDK15} shows such procedures can be performed in an actual vehicle.


The approaches most related to ours are~\cite{UFAM14,UFAM16,Ulus17}.
In that series of works, logs are encoded by \emph{signals}, \ie{} values that vary over time.
This can be seen as a \emph{state-based} view, while our timed words are \emph{event-based}.
The formalism used for specification in~\cite{UFAM14,UFAM16} is timed regular expressions (TREs).
An offline monitoring algorithm is presented in~\cite{UFAM14} and an online one is in~\cite{UFAM16}.
These algorithms are implemented in the tool \emph{Montre}~\cite{Ulus17}.
\LongVersion{%
	The difference between different specification formalisms (TREs, timed automata, temporal logics, etc.) are subtle, but for many realistic examples the difference may not matter.
}

In \cite{WHS17}, we presented an efficient algorithm for online timed pattern matching that employs (an automata-theoretic extension of) skip values from the Franek--Jennings--Smyth (FJS)  algorithm for string matching~\cite{FJS07}.
We showed that our algorithm generally outperforms a brute-force one and our previous BM algorithm~\cite{WAH16}: it is twice as fast for some realistic automotive examples.
Through our theoretical analysis as well as experiments on memory consumption, our algorithm was shown to be suited for online usage scenarios, too.
In comparison, we use here a different approach not based on FJS, but rather on parametric model checking.
%




Some algorithms have also been proposed for parameter identification of a temporal logic specification with uncertainty over a log.
For discrete time setting, an algorithm for an extension of LTL is proposed in~\cite{FR08} and for real-time setting, algorithms for parametric signal temporal logic (PSTL) are proposed in~\cite{ADMN11,BFM18}.
Although these works are related to our approach, previous approaches do not focus on segments of a log but one whole log.
In contrast, we exhibit intervals together with their associated parameter valuations, in a fully symbolic fashion.
We believe our matching-based setting is advantageous in many usage scenarios \eg{} from hours of a log of a car, extracting timing constraints of a certain actions to cause slipping.
Also, our setting allows the patterns with complex timing constraints (see the pattern in \cref{figure:patterns:blowup} for example).

In~\cite{BFMU17}, the robust pattern matching problem is considered over signal regular expressions, consisting in computing the \emph{quantitative} (robust) semantics of a signal relative to an expression.
For piecewise-constant and piecewise-linear signals, the problem can be effectively solved using a finite union of zones.

\LongVersion{Finally, in an orthogonal direction, in~\cite{CJL17}, the authors propose a refinement of trace abstractions technique to perform parameter synthesis for real-time systems; they use \imitator{}, as in our work.}

\begin{table*}[t]
	\centering
	
	\scalebox{1}{%
	\begin{tabular}{c||c|c|c}
		&log, target&specification, pattern &output\\\hline\hline
		string matching&
		a word $\str\in \Sigma^{*}$
		&
		a word $\pat\in\Sigma^{*}$
		&
		$\{(i,j)\in(\Zp)^{2}\mid \str(i,j)=\pat\}$
		\\\hline
		pattern matching (PM) &
		a word $\str\in \Sigma^{*}$
		&
		an NFA $\A$
		&
		$\{(i,j)\in(\Zp)^{2}\mid \str(i,j)\in \Lg(\A)\}$
		\\\hline
		timed PM&
		a timed word $\str\in(\Sigma \times \Rgzero)^{*}$
		&
		a TA $\A$
		&
		$\{(t,t') \in (\Rgzero)^{2}\mid \word|_{(t,t')} \in \Lg(\A)\}$
		\\\hline
		parametric timed PM&
		a timed word $\str\in(\Sigma \times \Rgzero)^{*}$
		&
		a PTA $\A$
		&
		$\{(t,t', \pval) \mid \word|_{(t,t')} \in \Lg(\valuate{\A}{\pval})\}$
		\\\hline
	\end{tabular}
	}
	
	\caption{Matching problems}
	\label{table:matching}
\end{table*}

A summary of various matching problems is given in \cref{table:matching}.


\paragraph{Outline}
We first introduce the parametric timed pattern matching problem and the necessary preliminaries in \cref{section:preliminaries}.
We then introduce our approach based on parametric model checking in \cref{section:approach}.
We apply it to benchmarks from~\cite{HAF14} in \cref{section:experiments}.
We \LongVersion{finally }outline future directions of research in \cref{section:conclusion}.

\section{Preliminaries and objective}\label{section:preliminaries}

Our target strings are \emph{timed words}~\cite{AD94}, that are time-stamped words over an alphabet $\Actions$.
Our patterns are given by parametric timed automata~\cite{AHV93}.


\LongVersion{
\subsection{Timed words and timed segments}
}

For an alphabet $\Actions$, a \emph{timed word} is a
sequence $\word$ of pairs $(\action_i,\tau_i) \in (\Sigma \times \Rgeqzero)$
satisfying $\tau_i < \tau_{i + 1}$ for any $i \in [1,|\word|-1]$.
Given a timed word $\word$, we often denote it by $(\overline{a},\overline{\tau})$, where $\overline{a}$ is the sequence $(\action_1, \action_2, \cdots)$ and $\overline{\tau}$ is the sequence $(\tau_1, \tau_2, \cdots)$.
Let $\word = (\overline{a},\overline{\tau})$ be a timed word.
We denote the subsequence $(\action_i, \tau_i),(\action_{i+1},
\tau_{i+1}),\cdots,(\action_j,\tau_j)$ by $\word (i,j)$.
For $t \in \R$ such that $- \tau_1 < t$, the \emph{$t$-shift} of $\word$ is
$(\overline{a}, \overline{\tau}) + t = (\overline{a}, \overline{\tau} +
t)$ where
$\overline{\tau} + t = \tau_1 + t,\tau_2 + t,\cdots, \tau_{|\tau|} + t$.
For timed words $\word = (\overline{a},\overline{\tau})$ and
$\word' = (\overline{a'},\overline{\tau'})$,
their \emph{absorbing concatenation} is $\word \circ \word' = (\overline{a} \circ \overline{a'}, \overline{\tau} \circ \overline{\tau'})$ where $\overline{a} \circ \overline{a'}$ and  $\overline{\tau} \circ \overline{\tau'}$ are usual concatenations
.

For a timed word $\word = (\overline{a}, \overline{\tau})$ on $\Actions$
and 
$t,t' \in \Rgeqzero$ satisfying $t < t'$, a \emph{timed word segment}
$\word|_{(t,t')}$ is defined by the timed word $(\word (i,j) - t) \circ (\$,t' - t)$
on the augmented alphabet $\Sigma \sqcup \{\$\}$,
where $i,j$ are chosen so that
$\tau_{i-1} < t \leq \tau_i$ and
$\tau_{j} \leq t' < \tau_{j+1}$.\footnote{%
	Observe that, in contrast with \cite{WAH16,WHS17}, we use here \emph{closed} intervals, \ie{} $t,t'$ may be equal to the start or end time of the word.
}\ea{Just realized $(\word (i,j) - t)$ was not defined… (we only defined $+t$ with $t \in \Rgzero$) I try to fix it using reviewer 2's comments}
Here the fresh symbol \(\$\) is called the \emph{terminal character}.


\LongVersion{
\subsection{Clocks, parameters and guards}
}


We assume a set~$\Clock = \{ \clock_1, \dots, \clock_\ClockCard \} $ of \emph{clocks}, \ie{} real-valued variables that evolve at the same rate.
A clock valuation is\LongVersion{ a function}
$\clockval : \Clock \rightarrow \Rgeqzero$.
We write $\ClocksZero$ for the clock valuation assigning $0$ to all clocks.
Given $d \in \Rgeqzero$, $\clockval + d$ \ShortVersion{is}\LongVersion{denotes the valuation} \st{} $(\clockval + d)(\clock) = \clockval(\clock) + d$, for all $\clock \in \Clock$.
Given $\resets \subseteq \Clock$, we define the \emph{reset} of a valuation~$\clockval$, denoted by $\reset{\clockval}{\resets}$, as follows: $\reset{\clockval}{\resets}(\clock) = 0$ if $\clock \in \resets$, and $\reset{\clockval}{\resets}(\clock)=\clockval(\clock)$ otherwise.

We assume a set~$\Param = \{ \param_1, \dots, \param_\ParamCard \} $ of \emph{parameters}\LongVersion{, \ie{} unknown constants}.
A parameter {\em valuation} $\pval$ is\LongVersion{ a function}
$\pval : \Param \rightarrow \grandqplus$.\footnote{%
	We choose $\grandqplus$ by consistency with most of the PTA literature, but also because, for classical PTAs, choosing~$\Rgeqzero$ leads to undecidability~\cite{Miller00}.
}
%
We assume ${\compOp} \in \{<, \leq, =, \geq, >\}$.
A guard~$\guard$ is a constraint over $\Clock \cup \Param$ defined by a conjunction of inequalities of the form $\clock \compOp d$, 
or $\clock \compOp \param$ with $d \in \grandn$ and $\param \in \Param$.
Given~$\guard$, we write~$\clockval\models\pval(\guard)$ if 
the expression obtained by replacing each~$\clock$ with~$\clockval(\clock)$ and each~$\param$ with~$\pval(\param)$ in~$\guard$ evaluates to true.


%

\subsection{Parametric timed automata}

Parametric timed automata (PTA) extend timed automata with parameters within guards and invariants in place of integer constants~\cite{AHV93}.

\begin{definition}[PTA]\label{def:uPTA}
	A PTA $\A$ is a tuple \mbox{$\A = (\Actions, \Loc, \locinit, \LocFinal, \Clock, \Param, \invariant, \Edges)$}, where:
	\begin{enumerate}
		\item $\Actions$ is a finite set of actions,
		\item $\Loc$ is a finite set of locations,
		\item $\locinit \in \Loc$ is the initial location,
		\item $\LocFinal \subseteq \Loc$ is the set of accepting locations,
		\item $\Clock$ is a finite set of clocks,
		\item $\Param$ is a finite set of parameters,
		\item $\invariant$ is the invariant, assigning to every $\loc\in \Loc$ a guard $\invariant(\loc)$,
		\item $\Edges$ is a finite set of edges  $\edge = (\loc,\guard,\action,\resets,\loc')$
		where~$\loc,\loc'\in \Loc$ are the source and target locations, $\action \in \Actions$, $\resets\subseteq \Clock$ is a set of clocks to be reset, and $\guard$ is a guard.
	\end{enumerate}
\end{definition}

Given\LongVersion{ a parameter valuation}~$\pval$, we denote by $\valuate{\A}{\pval}$ the non-parametric structure where all occurrences of a parameter~$\param_i$ have been replaced by~$\pval(\param_i)$.
We denote as a \emph{timed automaton} any structure $\valuate{\A}{\pval}$, by assuming a rescaling of the constants: by multiplying all constants in $\valuate{\A}{\pval}$  by their least common denominator, we obtain an equivalent (integer-valued) TA\LongVersion{, as defined in \cite{AD94}}.

The synchronous product (using strong broadcast, \ie{} synchronization on shared actions) of several PTAs gives a PTA.

\begin{definition}[synchronized product of PTAs]
	Let $N \in \grandn$.
	Given a set of PTAs $\A_i = (\Actions_i, \Loc_i, (\locinit)_i, \LocFinal_i, \Clock_i, \Param_i, \invariant_i, \Edges_i)$, $1 \leq i \leq N$,
	the \emph{synchronized product} of $\A_i$, $1 \leq i \leq N$,
	denoted by $\A_1 \parallel \A_2 \parallel \cdots \parallel \A_N$,
	is the tuple
		$(\Actions, \Loc, \locinit, \LocFinal, \Clock, \Param, \invariant, \Edges)$, where:
	\begin{enumerate}
		\item $\Actions = \bigcup_{i=1}^N\Actions_i$,
		\item $\Loc = \prod_{i=1}^N \Loc_i$,
		\item $\locinit = ((\locinit)_1, \dots, (\locinit)_N)$,
		\item $\LocFinal = \{(\loc_1, \dots, \loc_N) \in \Loc \mid \exists i \in [1, N]$ s.t.\ $\loc_i \in \LocFinal_i \}$,
		\item $\Clock = \bigcup_{1 \leq i \leq N} \Clock_i$,
		\item $\Param = \bigcup_{1 \leq i \leq N} \Param_i$,
		\item $\invariant((\loc_1, \dots, \loc_N)) = \bigwedge_{i = 1}^{N} \invariant_i(\loc_i)$ for all $(\loc_1, \dots, \loc_N) \in \Loc$, 
	\end{enumerate}
	and $\Edges{}$ is defined as follows.
	For all $\action \in \Actions$,
	let $\ActionsIndices_\action$ be the subset of indices $i \in 1, \dots, N$
	such that $\action \in \Actions_i$.
	For all  $\action \in \Actions$,
	for all $(\loc_1, \dots, \loc_N) \in \Loc$,
	for all \mbox{$(\loc_1', \dots, \loc_N') \in \Loc$},
	$\big((\loc_1, \dots, \loc_N), \guard, \action, \resets, (\loc'_1, \dots, \loc'_N)\big) \in \Edges$
	if:
	\begin{itemize}
		\item for all $i \in \ActionsIndices_\action$, there exist $\guard_i, \resets_i$ such that $(\loc_i, \guard_i, \action, \resets_i, \loc_i') \in \Edges_i$, $\guard = \bigwedge_{i \in \ActionsIndices_\action} \guard_i$, $\resets = \bigcup_{i \in \ActionsIndices_\action}\resets_i$, and,
		\item for all $i \not\in \ActionsIndices_\action$, $\loc_i' = \loc_i$.
\end{itemize}
\end{definition}

Let us now recall the concrete semantics of TA.

\begin{definition}[Semantics of a TA]
	Given a PTA $\A = (\Actions, \Loc, \locinit, \LocFinal, \Clock, \Param, \invariant, \Edges)$,
	and a parameter valuation~\(\pval\),
	the semantics of $\valuate{\A}{\pval}$ is given by the timed transition system (TTS) $(\States, \sinit, \flecheRel)$, with
	\begin{itemize}
		\item $\States = \{ (\loc, \clockval) \in \Loc \times \Rgeqzero^\ClockCard \mid \clockval \models \valuate{\invariant(\loc)}{\pval} \}$, 
		\item $\sinit = (\locinit, \ClocksZero) $,
		\item  $\flecheRel$ consists of the discrete and (continuous) delay transition relations:
		\begin{ienumeration}
			\item discrete transitions: $(\loc,\clockval) \longueflecheRel{\edge} (\loc',\clockval')$, 
				if $(\loc, \clockval) , (\loc',\clockval') \in \States$, and there exists $\edge = (\loc,\guard,\action,\resets,\loc') \in \Edges$, such that $\clockval'= \reset{\clockval}{\resets}$, and $\clockval\models\pval(\guard$).
			\item delay transitions: $(\loc,\clockval) \longueflecheRel{d} (\loc, \clockval+d)$, with $d \in \Rgeqzero$, if $\forall d' \in [0, d], (\loc, \clockval+d') \in \States$.
		\end{ienumeration}
	\end{itemize}
\end{definition}

    Moreover we write $(\loc, \clockval)\longuefleche{(\edge, d)} (\loc',\clockval')$ for a combination of a delay and discrete transition if
		$\exists  \clockval'' :  (\loc,\clockval) \longueflecheRel{d} (\loc,\clockval'') \longueflecheRel{\edge} (\loc',\clockval')$.

Given a TA~$\valuate{\A}{\pval}$ with concrete semantics $(\States, \sinit, \flecheRel)$, we refer to the states of~$\States$ as the \emph{concrete states} of~$\valuate{\A}{\pval}$.
A \emph{run} of~$\valuate{\A}{\pval}$ is a possibly infinite\ea{not necessary: fix it later} alternating sequence of concrete states of $\valuate{\A}{\pval}$ and pairs of edges and delays starting from the initial state $\sinit$ of the form
$\state_0, (\edge_0, d_0), \state_1, \cdots$
with
$i = 0, 1, \dots$, $\edge_i \in \Edges$, $d_i \in \Rgeqzero$ and $(\state_i , \edge_i , \state_{i+1}) \in \flecheRel$.
Given such a run, the associated \emph{timed word} is $(\action_1, \tau_1), (\action_2, \tau_2), \cdots$, where $\action_i$ is the action of edge~$\edge_{i-1}$, and $\tau_i = \sum_{0 \leq j \leq i-1} d_j$, for $i = 1, 2 \cdots$.\footnote{%
	The ``$-1$'' in indices comes from the fact that, following usual conventions in the literature, states are numbered starting from~0 while words are numbered from~1.
}
Given\LongVersion{ a state}~$\state=(\loc, \clockval)$, we say that $\state$ is reachable in~$\valuate{\A}{\pval}$ if $\state$ appears in a run of $\valuate{\A}{\pval}$.
By extension, we say that $\loc$ is reachable; and by extension again, given a set~$\somelocs$ of locations, we say that $\somelocs$ is reachable if there exists $\loc \in \somelocs$ such that $\loc$ is reachable in~$\valuate{\A}{\pval}$.

A finite run is \emph{accepting} if its last state $(\loc, \clockval)$ is such that $\loc \in \LocFinal$.
The (timed) \emph{language} $\Lg(\valuate{\A}{\pval})$ is defined to be the set of timed words associated with all accepting runs of~$\valuate{\A}{\pval}$.

\subsection{Reachability synthesis}

We will use here reachability synthesis to solve parametric timed pattern matching.
This procedure, called \EFsynth{}, takes as input a PTA~$\A$ and a set of target locations~$\somelocs$, and attempts to synthesize all parameter valuations~$\pval$ for which~$\somelocs$ is reachable in~$\valuate{\A}{\pval}$.
\EFsynth{} was formalized in \eg{} \cite{JLR15} and is a procedure that may not terminate, but that computes an exact result (sound and complete) if it terminates.
\EFsynth{} traverses the \emph{parametric zone graph} of~$\A$, which is a potentially infinite extension of the well-known zone graph of TAs (see, \eg{} \cite{ACEF09,JLR15}\LongVersion{ for a formal definition}).

\subsection{Timed pattern matching}

Let us recall timed pattern matching~\cite{WAH16,WHS17}\todo{more refs}.

\smallskip

\defProblem
	{Timed pattern matching}
	{a TA~$\A$, a timed word~$\str$ over a common alphabet $\Actions$}
	{compute all the intervals $(t,t')$ for which the segment  $\word|_{(t,t')}$ is 
	accepted by $\A$.
	That is, it requires
	the \emph{match set} $\mathcal{M}
	(\word,\A) = \{(t,t') \mid \word|_{(t,t')} \in \Lg(\A)\}$.}

\medskip

%

The match set $\mathcal{M} (\word,\A)$ is in general uncountable; however it allows finite representation, as a finite union of special polyhedra called \emph{zones} (see~\cite{BY03,WAH16}).

We now extend this problem to parameters by allowing a specification expressed using PTAs.
The problem now requires not only the start and end dates for which the property holds, but also the associated parameter valuations.

\smallskip

\defProblem
	{Parametric timed pattern matching}
	{a PTA~$\A$, a timed word~$\str$ over a common alphabet $\Actions$}
	{compute all the triples $(t,t', \pval)$ for which the segment  $\word|_{(t,t')}$ is accepted by $\valuate{\A}{\pval}$.
	That is, it requires the \emph{match set} $\mathcal{M} (\word,\A) = \{(t,t', \pval) \mid \word|_{(t,t')} \in \Lg(\valuate{\A}{\pval})\}$.}

\medskip

We will see that this match set can still be represented as a finite union of polyhedra, but in more dimensions, \viz{} $|\Param| + 2$, \ie{} the number of parameters + 2 further dimensions for~$t$ and~$t'$.
However, the form of the obtained polyhedra is more general than zones, as parameters may ``accumulate'' to produce sums of parameters with coefficients (\eg{} $3 \times \param_1 < \param_2 + 2 \times \param_3$).

\section{Timed pattern matching under uncertainty}\label{section:approach}

\subsection{General approach}\label{ss:general}

We make the following two assumptions, that do not impact the correctness of our method, but simplify the subsequent reasoning.

\begin{assumption}\label{assumption:final}
	As in~\cite{WAH16,WHS17}, we assume that the pattern automaton contains a single final location, and that all transitions to this final location are labeled with~$\$$.
\end{assumption}

\begin{assumption}\label{assumption:alphabet}
	We assume that the pattern automaton contains no action not part of the timed word alphabet, with the exception of the special action~$\$$.
\end{assumption}

Both assumptions are easy to remove in practice: for \cref{assumption:final}, if the pattern PTA contains more than one final location, they can be merged into a single final location.
For \cref{assumption:alphabet}, the transitions labeled with actions not part of the timed word alphabet can simply be deleted.

We show using the following approach that parametric timed pattern matching can reduce to parametric reachability analysis.

\begin{enumerate}
	\item We turn the pattern into a symbolic pattern, by allowing it to start anytime.
		In addition, we use two parameters to measure the (symbolic) starting time and the (symbolic) ending time of the pattern.
	\item We turn the timed word into a (non-parametric) timed automaton, that uses a single clock $\clockabs$, that measures the absolute time.
	\item We consider the synchronized product of the symbolic pattern PTA and the timed word (P)TA.
	\item We run the reachability synthesis algorithm to derive all possible parameter valuations for which the final location of the pattern automaton is reachable.
\end{enumerate}

\subsection{Our approach step by step}

\begin{figure*}[tb]
	\begin{subfigure}[b]{\textwidth}
	\centering
		\scalebox{1}{
		\begin{tikzpicture}[shorten >=1pt,node distance=2cm,on grid,auto]
		\node[location,initial] (prepres_0) {$\loc_0''$};
		\node[location, above of=prepres_0] (pres_0) {$\loc_0'$};
		\node[location, below right of=prepres_0] (s_0) {$\loc_0$};
		\node[location] (s_1) [right of=s_0] {$\loc_1$};
		\node[location] (s_2) [right of=s_1] {$\loc_2$};
		\node[location] (s_3) [right of=s_2]{$\loc_3$};
		\node[location] (s_4) [right of=s_3]{$\loc_4$};
		\node[location,accepting] (s_5) [right of=s_4]{$\loc_5$};

		\path[->] 
		(prepres_0) edge node[below left,align=center] {$\styleclock{\clockabs} = \styleparam{t} = 0$ \\ \styleact{start}} (s_0)
		(prepres_0) edge[] node[align=center] {$\styleact{a},\styleact{b}$ \\ $\styleclock{x} := 0$} (pres_0)
		(pres_0) edge [loop left] node[left,align=center] {$\styleact{a},\styleact{b}$ \\ $\styleclock{x} := 0$} (pres_0)
		(pres_0) edge [bend left,above right] node[align=center] {$\styleclock{\clockabs} = \styleparam{t} \land \styleclock{x} > 0$ \\ \styleact{start} \\ $\styleclock{x} := 0$} (s_0)
		(s_0) edge [above] node {\begin{tabular}{c}
									$\styleclock{x} > \styleparam{\param_1}$\\
									$\styleact{a}$\\
									$\styleclock{x} := 0$
								\end{tabular}} (s_1)
		(s_1) edge [above] node {\begin{tabular}{c}
									$\styleclock{x} < \styleparam{\styleparam{\param_2}}$\\
									$\styleact{a}$\\
									$\styleclock{x} := 0$
								\end{tabular}} (s_2)
		(s_2) edge [above] node[align=center] {$\styleclock{x} < \styleparam{\styleparam{\param_2}}$\\$\styleact{a}$} (s_3)
		(s_3) edge [above] node[align=center] {$\styleclock{\clockabs} = \styleparam{t'}$\\$\styleact{\$}$\\$\styleclock{x} := 0$} (s_4)
		(s_4) edge [above] node[align=center] {$\styleclock{x} > 0$} (s_5)
		;
		\end{tikzpicture}}
	\caption{$\TransPattern$ applied to the PTA in \cref{figure:example:PTA}}
	\label{figure:approach:PTA}
	\end{subfigure}
	
	\begin{subfigure}[b]{0.99\textwidth}
	\centering
	\footnotesize
		\begin{tikzpicture}[shorten >=1pt,node distance=1.9cm,on grid,auto]
		\node[location,initial] (w0) {$\wloc_0$};
		\node[location] (w1) [right of=w0] {$\wloc_1$};
		\node[location] (w2) [right of=w1] {$\wloc_2$};
		\node[location] (w3) [right of=w2]{$\wloc_3$};
		\node[location] (w4) [right of=w3]{$\wloc_4$};
		\node[location] (w5) [right of=w4]{$\wloc_5$};
		\node[location] (w6) [right of=w5]{$\wloc_6$};
		\node[location] (w7) [right of=w6]{$\wloc_7$};
		\node[location] (w8) [right of=w7]{$\wloc_8$};
		\node[location] (w9) [right of=w8]{$\wloc_9$};

		\path[->]
			(w0) edge [above] node[align=center] {$\styleclock{\clockabs} = 0.5$\\$\styleact{a}$} (w1)
			(w1) edge [above] node[align=center] {$\styleclock{\clockabs} = 0.9$\\$\styleact{a}$} (w2)
			(w2) edge [above] node[align=center] {$\styleclock{\clockabs} = 1.3$\\$\styleact{b}$} (w3)
			(w3) edge [above] node[align=center] {$\styleclock{\clockabs} = 1.7$\\$\styleact{b}$} (w4)
			(w4) edge [above] node[align=center] {$\styleclock{\clockabs} = 2.8$\\$\styleact{a}$} (w5)
			(w5) edge [above] node[align=center] {$\styleclock{\clockabs} = 3.7$\\$\styleact{a}$} (w6)
			(w6) edge [above] node[align=center] {$\styleclock{\clockabs} = 4.9$\\$\styleact{a}$} (w7)
			(w7) edge [above] node[align=center] {$\styleclock{\clockabs} = 5.3$\\$\styleact{a}$} (w8)
			(w8) edge [above] node[align=center] {$\styleclock{\clockabs} = 6.0$\\$\styleact{a}$} (w9)
		;
		\end{tikzpicture}
	\caption{$\TransWord$ applied to the timed word in \cref{figure:example:word}}
	\label{figure:approach:word}
	\end{subfigure}

	\caption{Our transformations exemplified on \cref{figure:example}}
	\label{figure:approach}
\end{figure*}
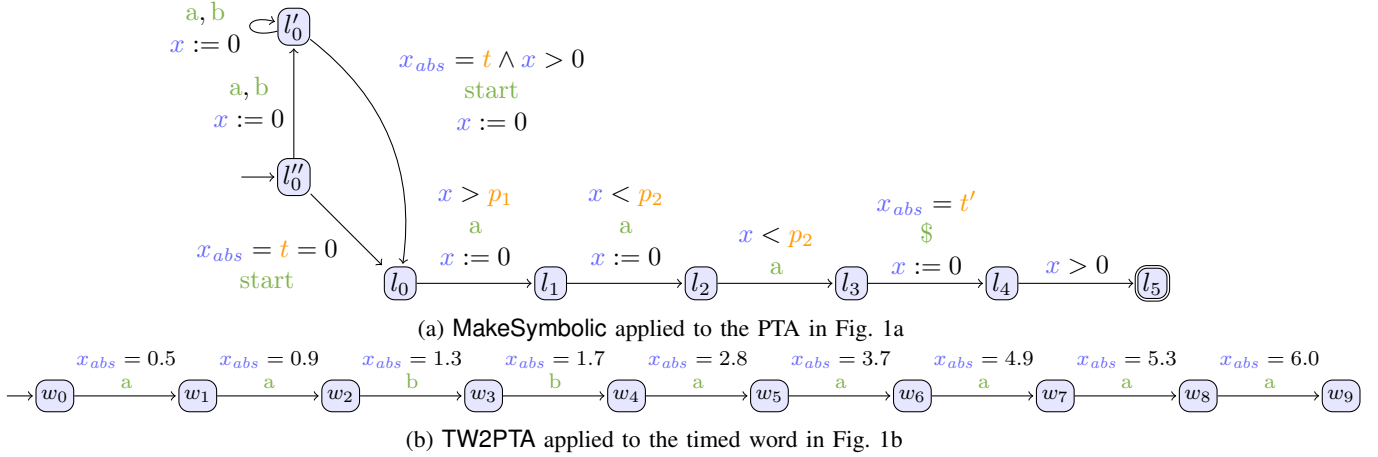

\subsubsection{Making the pattern symbolic}
In this first step, we first add two parameters $t$ and~$t'$, which encode the (symbolic) start and end time where the pattern holds on the input timed word.
This way, we will obtain a result in the form of a finite union of polyhedra in $\ParamCard + 2$ dimensions, where the 2 additional dimensions come from the addition of~$t$ and~$t'$.
We also add a clock~$\clockabs$ initially~0 and never reset (this clock is shared by the pattern PTA and the subsequent timed word TA).
Then, we modify the pattern PTA as follows:
\begin{enumerate}
	\item we add two fresh locations (say~$\locinit'$ and $\locinit''$) prior to the initial location~$\locinit$;
	\item we add a fresh clock (say~$\clock$); in practice, as this clock is used only in the initial location, an existing clock of the pattern may be reused;
	\item we add an unguarded self-loop allowing any action of the timed word on $\locinit'$, and resetting~$\clock$;
	\item we add an unguarded transition from~$\locinit''$ to $\locinit'$ allowing any action of the timed word on $\locinit'$, and resetting~$\clock$;
	\item we add a transition from~$\locinit'$ to~$\locinit$ guarded by~$\clockabs = t \land \clock > 0$, labeled with a fresh action \styleact{start} and resetting all clocks of the pattern (except~$\clockabs$);
	\item we add a transition from~$\locinit''$ to~$\locinit$ guarded by~$\clockabs = t \land \clockabs = 0$, labeled with \styleact{start};
	\item we add a guard $\clockabs = t'$ and reset $\clock$ on the final transitions labeled with~$\$$;
	\item we add an extra location after the former final location, with a transition guarded by $\clock > 0$;
	\item the initial location of the modified PTA becomes~$\locinit''$.
\end{enumerate}

Let us give the intuition behind our transformation.
First, the two guards $\clockabs = t$ and $\clockabs = t'$ allow to record symbolically the value of the starting and ending dates.
Second, the new locations $\locinit''$ and $\locinit'$ allow the pattern to ``start anytime''; that is, it can synchronize with the timed word TA for an arbitrary long time while staying in the initial location~$\locinit''$ (and therefore \emph{not} matching the pattern), and start (using the transition from~$\locinit'$ to~$\locinit$) anytime.
Third, due to the constraint $\clock > 0$, a non-zero time must elapse between the last action before the pattern start and the actual pattern start.
The distinction between $\locinit''$ and $\locinit'$ is necessary to also allow starting the pattern at $\clockabs = 0$ if no action occurred before.
Finally, the guard $\clock > 0$ just before the final location ensures the next action of the system (if any) is taken after a non-zero delay, following our definitions of timed word and projection.

Let \TransPattern{} denote this procedure.\LongVersion{

}
Consider the pattern PTA~$\A$ in \cref{figure:example:PTA}.
The result of $\TransPattern(\A)$ is given in \cref{figure:approach:PTA}.
Note that we use the same clock $\clock$ for both the extra clock introduced by our construction and the original clock of the pattern automaton from \cref{figure:example:PTA}.
\todo{gain vertical space here!}

\subsubsection{Converting the timed word into a (P)TA}
In this second step, we convert the timed word into a (non-parametric) timed automaton.
This is very straightforward, and simply consists in converting a timed word of the form $(\action_1, \tau_1), \cdots , (\action_n, \tau_n)$ into a sequence of transitions labeled with~$\action_i$ and guarded with $\clockabs = \tau_i$ (recall that $\clockabs$ measures the absolute time and is shared by the timed word automaton and the pattern automaton).

Let \TransWord{} denote this procedure.\LongVersion{

}
Consider the timed word~$\word$ in \cref{figure:example:word}.
The result of $\TransWord(\word)$ is given in \cref{figure:approach:word}.

\subsubsection{Synchronized product}
The last part of the method consists in performing the synchronized product of $\TransPattern(\A)$ and $\TransWord(\word)$, and calling \EFsynth{} on the resulting PTA.

\LongVersion{
	\medskip
}

We summarize our method $\PTPM(\A,\word)$ in \cref{algo:PTPM}.

\begin{algorithm}[tb]
	\Input{A pattern PTA $\A$ with final location~$\LocFinal$, a timed word~$\word$}
	\Output{Constraint $\K$ over the parameters}

	\BlankLine

	$\A' \assign \TransPattern(\A)$
	
	$\A_{\word} \assign \TransWord(\word)$
	
	\Return $\EFsynth(\A' \parallel \A_{\word}, \LocFinal)$
	
	\caption{$\PTPM(\A, \word)$}
	\label{algo:PTPM}
\end{algorithm}

\begin{example}
	Consider again the timed word~$\word$ and the PTA pattern~$\A$ in \cref{figure:example}.
	The result of $\PTPM(\A,\word)$ is as follows:
	
	

	\begin{align*}
		&
		1.7 < \styleparam{t} < 2.8 - \styleparam{\param_1}
		\land
		4.9 \leq \styleparam{t'} < 5.3
		\land
		\styleparam{\param_2} > 1.2
		\\
		\lor\ \ &
		2.8 < \styleparam{t} < 3.7 - \styleparam{\param_1}
		\land
		5.3 \leq \styleparam{t'} < 6
		\land
		\styleparam{\param_2} > 1.2
		\\
		\lor \ \ &
		3. 7 < \styleparam{t} < 4.9 - \styleparam{\param_1}
		\land
		\styleparam{t'} \geq 6
		\land
		\styleparam{\param_2} > 0.7
	\end{align*}
	Observe that, for the parameter valuation given in the introduction ($\param_1 = \param_2 = 1$), only the pattern corresponding to the last disjunct could be obtained, \ie{} the pattern that matches the last three $a$ of the timed word in \cref{figure:example:word}.
	In contrast, the first disjunct can match the first three $a$ coming after the two $b$s, while the second disjunct allows to match the three $a$s in the middle of the last five $a$s in \cref{figure:example:word}.\LongVersion{
	
	}
	We give various projections of this constraint onto two dimensions in \cref{figure:projections} (the difference between plain red and light red is not significant---light red constraints denote unbounded constraints towards at least one dimension).
\end{example}

\begin{figure*}[t]
	\begin{subfigure}[b]{.3\textwidth}
		\centering
		
		\includegraphics[width=.8\textwidth]{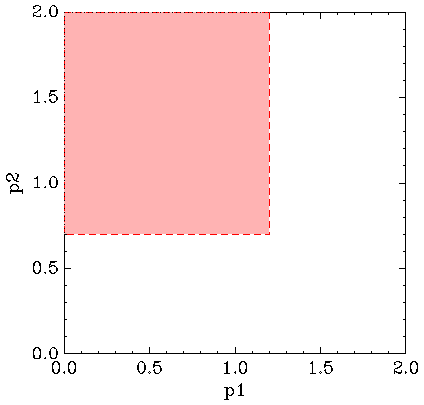}

		\caption{On $\param_1$ and $\param_2$}
	\end{subfigure}
	\hspace{1cm}
	\begin{subfigure}[b]{.3\textwidth}
		\centering
		
		\includegraphics[width=.8\textwidth]{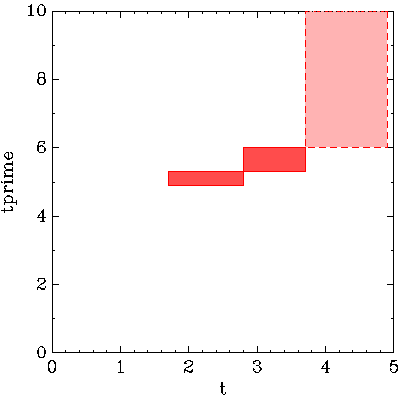}

		\caption{On $t$ and $t'$}
	\end{subfigure}
	\hspace{1cm}
	\begin{subfigure}[b]{.3\textwidth}
		\centering
		
		\includegraphics[width=.8\textwidth]{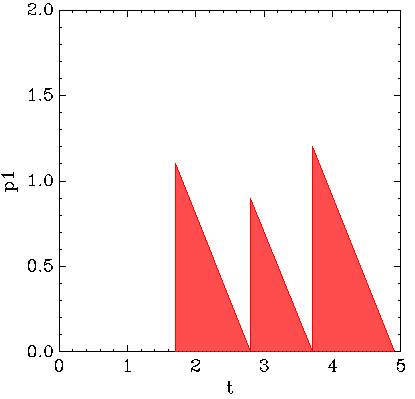}

		\caption{On $t$ and $\param_1$}
	\end{subfigure}
	\caption{Projections of the result of parametric timed pattern matching on \cref{figure:example}}
	\label{figure:projections}
\end{figure*}

\subsection{Termination}

\LongVersion{
	We state below the termination of our procedure.
}

\begin{lemma}[termination]\label{lemma:termination}
	Let $\A$ be a PTA encoding a parametric pattern, and $\word$ be a timed word.
	Then $\PTPM(\A,\word)$ terminates.
\end{lemma}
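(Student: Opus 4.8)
The plan is to reduce the termination of $\PTPM(\A,\word)$ to the termination of the single call to $\EFsynth$ in \cref{algo:PTPM}, since the two preprocessing steps $\TransPattern$ and $\TransWord$ are manifestly terminating syntactic transformations. Recall that $\EFsynth$ explores the parametric zone graph of $\A' \parallel \A_{\word}$ and terminates as soon as this reachable graph is finite. As the product has a finite location set and a finite edge set, from every symbolic state there are only finitely many successors (at most one parametric-zone successor per edge), so the branching is finite. Hence it suffices to exhibit a uniform bound on the length of every path in the parametric zone graph, \ie{} on the number of consecutive discrete steps the product can perform; finiteness of the reachable graph (bounded depth, finite branching) then follows, and with it the termination of $\EFsynth$.

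The key structural observation is that $\A_{\word} = \TransWord(\word)$ is a linear, acyclic automaton: it is the chain $\wloc_0 \to \wloc_1 \to \cdots \to \wloc_{|\word|}$ with exactly $|\word|$ edges, each labeled by an action of the word and guarded by $\clockabs = \tau_i$. By \cref{assumption:alphabet}, every action occurring in $\A' = \TransPattern(\A)$ other than the two fresh actions $\styleact{start}$ and $\$$ belongs to the alphabet of $\word$, hence is a \emph{shared} action of the product. Under strong synchronization on shared actions, every product transition labeled by a word action fires exactly one edge of $\A_{\word}$. Since $\A_{\word}$ is acyclic with $|\word|$ edges, at most $|\word|$ such transitions can occur along any run of the product.

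It then remains to bound the transitions labeled by the non-shared actions $\styleact{start}$ and $\$$ (together with the final internal $\clock > 0$ step added by $\TransPattern$), which advance only $\A'$. By construction these form a one-way, acyclic skeleton: $\styleact{start}$ is the unique entry from the initial gadget $\{\locinit'', \locinit'\}$ into the copy of the original pattern and is never revisited, while by \cref{assumption:final} every $\$$-labeled edge leads to the single final location, from which only the added $\clock > 0$ transition departs. Consequently each such action fires at most a constant number $c$ of times, $c$ depending only on the shape of $\TransPattern(\A)$. Combining the two bounds, every run performs at most $|\word| + c$ discrete steps. As delays are absorbed into the symbolic-successor computation of the zone graph, the depth of the reachable parametric zone graph is at most $|\word| + c$; being of bounded depth and finite branching, it is finite, so $\EFsynth$, and therefore $\PTPM(\A,\word)$, terminates.

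The delicate point, and the one the argument really turns on, is that $\A'$ need \emph{not} be acyclic: the self-loop on $\locinit'$ added by $\TransPattern$ (as well as any cycle already present in the original pattern $\A$) could a priori be traversed unboundedly often, which is exactly the situation in which $\EFsynth$ may diverge for a general PTA. What rescues termination here is the coupling imposed by strong synchronization, every such cycle being labeled by word actions and therefore not traversable without simultaneously advancing the acyclic, finite automaton $\A_{\word}$. Making this coupling explicit---charging the progress of every pattern cycle against progress in the finite word---is the crux of the proof, and is ultimately what separates this parametric matching problem from the generally undecidable reachability-synthesis problem for PTAs.
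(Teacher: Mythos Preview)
Your proof is correct and follows essentially the same approach as the paper's: both arguments hinge on the acyclicity of $\A_{\word}$, the strong synchronization on shared actions (via \cref{assumption:alphabet}) forcing every pattern cycle to consume a letter of the finite word, and the observation that the non-shared actions $\styleact{start}$ and $\$$ can each fire only a bounded number of times. The only notable difference is that the paper pushes the analysis one step further to an explicit worst-case state count of $(|\word|+3)\times(|\word|+1)\times\BranchingCard^{|\word|}$ (where $\BranchingCard$ is the branching degree of the pattern), whereas you stop at the cleaner ``bounded depth plus finite branching implies finite reachable graph'' argument; both are sufficient for termination.
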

\begin{proof}
	First, observe that there may be non-determinism in the pattern PTA, \ie{} the timed word can potentially synchronize with two transitions labeled with the same action from a given location.
	Even if there is no syntactic nondeterminism, nondeterminism can appear due to the interleaving of the initial \styleact{start} action:
		in \cref{figure:approach}, the first \styleact{a} of the timed word can either synchronize with the self-loop on~$\locinit'$, or the \styleact{start} action can first occur, and then the first \styleact{a} synchronizes of the timed word with the \styleact{a} labeling the transition from~$\locinit$ to~$\loc_1$ of the pattern PTA.
	Second, the pattern PTA may well have loops (and this is the case in our experiments in \cref{ss:XP:blowup}), which yields an infinite parametric zone graph (for the pattern automaton not synchronized with the word automaton).
	However, let us show that only a finite part of the parametric zone graph is explored by \EFsynth{}:
		indeed, since $\TransWord(\word)$ is only a finite sequence, and thanks to the strong synchronization between the pattern PTA and the timed word PTA and due to \cref{assumption:alphabet}, only a finite number of finite discrete paths in the synchronized product will be explored.
	The only interleaving is due to the initial \styleact{start} action (which appears twice in the pattern PTA but can only be taken once at most due to the mutually exclusive guards $\clock = 0$ and $\clock > 0$), and due to the final $\styleact{\$}$ action, that only appear on the last transition to the last-but-one final location.
	As the pattern PTA is finitely branching, this gives a finite number of finite paths.
	The length of each path is clearly bounded by~$|\word| + 3$.
	Let us now consider the maximal number of such paths: given a location in $\TransWord(\word)$, the choice of the action (say~$\styleact{\action}$) is entirely deterministic.
	However, the pattern PTA may be non-deterministic, and can synchronize with $\BranchingCard$ outgoing transitions labeled with~$\styleact{\action}$, which gives $\BranchingCard^{|\word|}$ combinations.
	In addition, the \styleact{start} action can be inserted exactly once, at any position in the timed word (from before the first action to after the last action of the word---in the case of an empty pattern): this gives therefore $(|\word|+1)\times\BranchingCard^{|\word|}$ different runs.
	(The $\styleact{\$}$ is necessarily the last-but-one action, and does not impact the number of runs, as the (potential) outgoing transitions from the final location are not explored.)
	Altogether, a total number of at most $(|\word| + 3) \times (|\word|+1) \times\BranchingCard^{|\word|}$ symbolic states is explored by \EFsynth{} in the worst case.
\end{proof}

\reviewer{2}{
* As far as I understand, the proof of Lemma 1 depends on the internal
of EFsynth and, without more description of EFsynth (e.g., how EFsynth
works or some sufficient condition on the input for EFsynth to
terminate), it would be hard to grasp it.}

\cref{lemma:termination} may not come as a surprise, as the input timed word is finite.
But it is worth noting that it comes in contrast with the fact that the wide majority of decision problems are undecidable for parametric timed automata, including the emptiness of the valuation set for which a given location is reachable both, for integer- and rational-valued parameters~\cite{AHV93,Miller00} (see \cite{Andre17STTT} for a survey).

\reviewer{2}{* First, there is no correctness argument for the proposed
algorithm (except for its termination), although it's intuitively
clear.  It would be nice to state correctness as a theorem.}

\todo{

We state below the correctness of our procedure.

\begin{lemma}[correctness]\label{lemma:correctness}
	\todo{}
\end{lemma}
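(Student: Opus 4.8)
The correctness statement I would aim for is that $\PTPM$ computes exactly the parametric match set:
\[
\PTPM(\A,\word) \;=\; \mathcal{M}(\word,\A) \;=\; \{(t,t',\pval) \mid \word|_{(t,t')} \in \Lg(\valuate{\A}{\pval})\} .
\]
Since \EFsynth{} is sound and complete whenever it terminates, and \cref{lemma:termination} guarantees termination, the constraint returned by $\PTPM(\A,\word) = \EFsynth(\A' \parallel \A_{\word}, \LocFinal)$ denotes exactly the set of valuations $\pval'$ of $\Param \cup \{t,t'\}$ for which $\LocFinal$ is reachable in $\valuate{(\A' \parallel \A_{\word})}{\pval'}$. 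Reading $\pval'$ as a triple $(t,t',\pval)$, the whole lemma reduces to the reachability characterisation
\[
\LocFinal \text{ reachable in } \valuate{(\A' \parallel \A_{\word})}{(t,t',\pval)} \iff \word|_{(t,t')}\in\Lg(\valuate{\A}{\pval}),
\]
which I would establish by a two-way correspondence between accepting runs.

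The first step is to show that every run of the product reaching $\LocFinal$ has a canonical shape dictated by $\TransPattern$, $\TransWord$ and \cref{assumption:alphabet}: an initial phase staying in $\locinit''$/$\locinit'$ in which the word edges for $\action_1,\dots,\action_{i-1}$ are swallowed by the self-loop \emph{without} being matched; a single \styleact{start} edge, taken at absolute time $\clockabs = t$ (asynchronous, since \styleact{start} is not in the word alphabet), which resets all pattern clocks; a middle phase synchronising the pattern with the word edges for $\action_i,\dots,\action_j$ at their absolute timestamps; a single $\$$ edge at $\clockabs = t'$ (unique by \cref{assumption:final}); and finally the $\clock>0$ edge into $\LocFinal$. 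Because $\clockabs$ is shared and never reset while every pattern clock is reset at \styleact{start}, each pattern clock reads $\cdot - t$ throughout the middle phase, so the pattern ``sees'' exactly the shifted word $\word(i,j)-t$ followed by $\$$ at $t'-t$, that is, precisely the segment $\word|_{(t,t')}$.

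For the ($\Leftarrow$) direction I would take an accepting run of $\valuate{\A}{\pval}$ on $\word|_{(t,t')}$ and lift it to a product run of this canonical shape: fire the self-loop for the prefix, insert \styleact{start} at $t$, replay each discrete step synchronised with the matching word edge (the delays coincide after the $-t$ shift), fire $\$$ at $t'$, then the $\clock>0$ edge. For ($\Rightarrow$) I would project a product run onto its pattern component, erase the \styleact{start} edge and the prefix self-loops, and shift all timestamps by $-t$, recovering an accepting run of $\valuate{\A}{\pval}$ whose underlying timed word is $\word|_{(t,t')}$. These two transformations are inverse on the matched actions, so the correspondence is a bijection preserving acceptance, and a routine induction on run length makes this precise.

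The delicate point, which I expect to be the main obstacle, is matching the interval endpoints with the correct strictness, i.e.\ checking that the product reaches $\LocFinal$ for $(t,t')$ exactly when the indices $i,j$ forced by the run coincide with those of the segment definition ($\tau_{i-1}<t\le\tau_i$ and $\tau_j\le t'<\tau_{j+1}$). The guard $\clock>0$ on the \styleact{start} edge from $\locinit'$ gives the strict lower bound $t>\tau_{i-1}$, while requiring $\action_i$ to be the first synchronised action gives $t\le\tau_i$; symmetrically, $t'\ge\tau_j$ follows because $\$$ can only fire after the last matched action, and the strict upper bound $t'<\tau_{j+1}$ is forced by the final $\clock>0$ edge together with the timing of $\TransWord(\word)$, which forbids delaying past $\tau_{j+1}$ without taking the word edge for $\action_{j+1}$, an edge the pattern can no longer synchronise with once it sits on the $\$$-transition. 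Verifying these four inequalities, handling the borderline cases where $t$ or $t'$ equals a timestamp (so that the appended $\$$ does not break strict monotonicity of the segment), and discharging the harmless interleaving of the unique \styleact{start}, is where essentially all the bookkeeping lies.
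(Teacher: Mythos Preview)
The paper contains no proof for this lemma: both the statement and the proof body are literal \verb|\todo{}| placeholders (the lemma is even wrapped in a \verb|\todo{}| block in the source). There is therefore nothing in the paper to compare your attempt against; you have reconstructed a statement and a proof sketch that the authors never supplied.

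That said, your reconstruction is sensible and matches the authors' evident intent (the reviewer comment just above the lemma asks for exactly this). The reduction to the soundness/completeness of \EFsynth{} plus termination is the right high-level structure, and your canonical-shape analysis of product runs is the natural way to establish the required reachability equivalence.

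One point deserves more care. Your argument for the strict upper bound $t'<\tau_{j+1}$ relies on the claim that ``the timing of $\TransWord(\word)$ forbids delaying past $\tau_{j+1}$ without taking the word edge for $\action_{j+1}$.'' As described in the paper (and drawn in \cref{figure:approach:word}), $\TransWord(\word)$ carries no location invariants, so nothing in the text forces the word component to leave $\wloc_j$ at time~$\tau_{j+1}$; it could idle there while the pattern fires the asynchronous $\$$ at any $t'\ge\tau_j$ and then the $\clock>0$ edge. The example output in the paper does exhibit the expected strict bounds (e.g.\ $4.9\le t'<5.3$), which suggests the implementation adds urgency or invariants not spelled out in the prose; but as written, your justification of this inequality appeals to a constraint that the paper's construction does not explicitly provide. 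If you want a self-contained proof you should either assume invariants $\clockabs\le\tau_{j+1}$ on each $\wloc_j$ (and state that $\TransWord$ adds them), or give a different argument for why the last $\clock>0$ edge cannot be taken once $\clockabs\ge\tau_{j+1}$.
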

\begin{proof}
	\todo{}
\end{proof}

}

\subsection{Pattern matching with optimization}\label{ss:optimization}

We also address the following optimization problem:
given a timed word and a pattern containing parameters, what is the minimum or maximum value of a given parameter such that the pattern is matched by the timed word?
That is, we are only interested in the optimal value of the given parameter, and not in the full list of matches as in~\PTPM{}.

While this problem can be solved using our solution from \cref{ss:general} (by computing the multidimensional constraint, and then eliminating all parameters but the target parameter, using variable elimination techniques), we use here a dedicated approach, with the hope it be more efficient.
Instead of managing all symbolic matches (\ie{} a finite union of polyhedra), we simply manage the current optimum; in addition, we cut branches that cannot improve the optimum, with the hope to reduce the number of states explored.
For example, assume parameter $\param$ is to be minimized; if the current minimum is $\param > 2$, and if a branch is such that $\param \geq 3$, then this branch will not improve the minimum, and can safely be discarded.
Let \PTPMopt{} denote this procedure.

\begin{example}
	Consider again the timed word~$\word$ and the PTA pattern~$\A$ in \cref{figure:example}.
	Minimizing $\styleparam{\param_2}$ so that the pattern matches the timed word for at least one position gives
	$\styleparam{\param_2} > 0.7$,
	while maximizing $\styleparam{\param_1}$ gives $\styleparam{\param_1} < 1.2$.
\end{example}

\section{Experiments}\label{section:experiments}

\reviewer{1}{It would have been also nice to have an experiment that would have shown how the computation time increases with the number of parameters to compute.}

We evaluated our approach against two standard benchmarks from~\cite{HAF14}, already used in~\cite{WHS17}, as well as a third benchmark specifically designed to test the limits of parametric timed pattern matching.
We fixed no bounds for our parameters.

We used \imitator{}~\cite{AFKS12} to perform the parameter synthesis (algorithm \EFsynth{}).
\imitator{} relies on the Parma Polyhedra Library (PPL)~\cite{BHZ08} to compute symbolic states.
It was shown in~\cite{BFMU17} that polyhedra may be dozens of times slower than more efficient data structures such as DBMs (difference bound matrices); however, for parametric analyses, DBMs are not suitable, and parameterized extensions (\eg{} in~\cite{HRSV02}) still need polyhedra in their representation.

\AuthorVersion{%
	We used a slightly modified version of \imitator{} for technical reasons (see \cref{appendix:polyhedra}).
}
We wrote a simple Python script to implement the \TransWord{} procedure; the patterns (\cref{figure:patterns}) were manually transformed following the \TransPattern{} procedure, and converted into the input language of \imitator{}.

We ran experiments using \imitator{} 2.10.4 ``Butter Jellyfish'' 
on a Dell Precision 3620 i7-7700 3.60\,GHz with 64\,GiB memory running Linux Mint 19 beta 64\,bits.
Sources, binaries, models, logs can be found at \IEEEVersion{\texttt{www.imitator.fr/static/ICECCS18}}\AuthorVersion{\url{www.imitator.fr/static/ICECCS18}}.

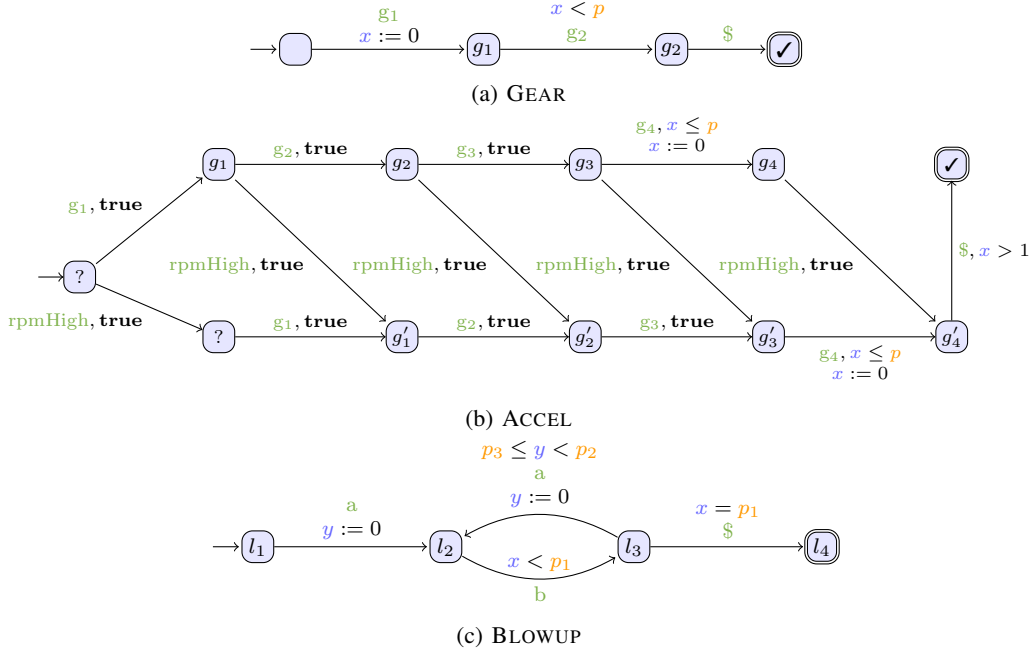
\begin{figure*}[t]
	\begin{subfigure}[b]{\textwidth}
	\centering
	\footnotesize

  \begin{tikzpicture}[shorten >=1pt,node distance=2.5cm,on grid,auto] 
    \node[location,initial] (s_0)  {}; 
    \node[location,node distance=2.5cm] (s_1) [right=of s_0] {$g_1$}; 
    \node[location,node distance=2.5cm] (s_2) [right=of s_1] {$g_2$};
    \node[location,accepting,node distance=1.5cm] (s_3) [right=of s_2] {\cmark};
    \path[->] 
    (s_0) edge  [above] node[align=center] {$\styleact{g_1}$\\$\styleclock{x} := 0$} (s_1)
    (s_1) edge  [above] node[align=center] {$\styleclock{x} < \styleparam{\param} $\\$\styleact{g_2}$} (s_2) 
    (s_2) edge  [above] node {$\styleact{\$}$} (s_3);
  \end{tikzpicture}
  \caption{\textsc{Gear}}
	\label{figure:patterns:gear}
	\end{subfigure}
	
	\begin{subfigure}[b]{\textwidth}
	\centering
	\scriptsize

	\begin{tikzpicture}[auto,node distance=2cm]
		\node[location, initial] (s_000) {$?$};

		\node[location] (s_100)[above right=of s_000,yshift=-10] {$g_1$};
		\node[location] (s_001)[below right=of s_000,yshift=30] {$?$};

		\node[location] (s_200)[right=of s_100] {$g_2$};
		\node[location] (s_101)[right=of s_001] {$g_1'$};

		\node[location] (s_300)[right=of s_200] {$g_3$};
		\node[location] (s_201)[right=of s_101] {$g_2'$};

		\node[location] (s_400)[right=of s_300] {$g_4$};
		\node[location] (s_301)[right=of s_201] {$g_3'$};

		\node[location] (s_401)[right=of s_301] {$g_4'$};

		\node[location,accepting] (f)[right=of s_400] {\cmark};

		\path[->]
		(s_000) edge  [above left] node {$\styleact{g_1}, \CTrue$} (s_100)
		(s_100) edge  [above] node {$\styleact{g_2}, \CTrue$} (s_200)
		(s_200) edge  [above] node {$\styleact{g_3}, \CTrue$} (s_300)
		(s_300) edge  [above] node {\begin{tabular}{c}
										$\styleact{g_4}, \styleclock{x} \leq \styleparam{\param}$\\ 
										$\styleclock{x} := 0$
									\end{tabular}} (s_400)

		(s_100) edge  [below left] node {$\styleact{rpmHigh}, \CTrue$} (s_101)
		(s_200) edge  [below left] node {$\styleact{rpmHigh}, \CTrue$} (s_201)
		(s_300) edge  [below left] node {$\styleact{rpmHigh}, \CTrue$} (s_301)
		(s_400) edge  [below left] node {$\styleact{rpmHigh}, \CTrue$} (s_401)

		(s_001) edge  [above] node {$\styleact{g_1}, \CTrue$} (s_101)
		(s_101) edge  [above] node {$\styleact{g_2}, \CTrue$} (s_201)
		(s_201) edge  [above] node {$\styleact{g_3}, \CTrue$} (s_301)
		(s_301) edge  [below] node {\begin{tabular}{c}
										$\styleact{g_4}, \styleclock{x} \leq \styleparam{\param}$\\ 
										$\styleclock{x} := 0$
									\end{tabular}} (s_401)

		(s_000) edge  [below left] node {$\styleact{rpmHigh}, \CTrue$} (s_001)

		(s_401) [bend right=0] edge [right] node {$\styleact{\$}, \styleclock{x} > 1$} (f);
	\end{tikzpicture}

	\caption{\textsc{Accel}}
	\label{figure:patterns:accel}
	\end{subfigure}

	\begin{subfigure}[b]{\textwidth}
	\centering
	\footnotesize

	\begin{tikzpicture}[shorten >=1pt,node distance=2.5cm,on grid,auto] 
		\node[location,initial] (s_0)  {$\loc_1$}; 
		\node[location] (s_1) [right=of s_0] {$\loc_2$}; 
		\node[location] (s_2) [right=of s_1] {$\loc_3$};
		\node[location,accepting] (s_3) [right=of s_2] {$\loc_4$};
		\path[->] 
			(s_0) edge [above] node[align=center] {$\styleact{a}$\\$\styleclock{y} := 0$} (s_1)
			(s_1) edge[bend right] node[above] {$\styleclock{x} < \styleparam{\param_1}$} node[below] {$\styleact{b}$} (s_2)
			(s_2) edge node[above,align=center] {$\styleclock{x} = \styleparam{\param_1}$\\$\styleact{\$}$} (s_3)
			(s_2) edge[bend right] node[above,align=center] {$\styleparam{\param_3} \leq \styleclock{y} < \styleparam{\styleparam{\param_2}}$\\$\styleact{a}$\\$\styleclock{y} := 0$} (s_1)
		;
	\end{tikzpicture}
  \caption{\textsc{Blowup}}
	\label{figure:patterns:blowup}
	\end{subfigure}

	\caption{Experiments: patterns}
	\label{figure:patterns}
\end{figure*}


\subsection{\textsc{Gear}}\label{ss:XP:gear}

Benchmark \textsc{Gear} is \LongVersion{inspired by the scenario of }monitoring the gear change of an automatic transmission system.
We conducted simulation of the model of an automatic transmission system~\cite{HAF14}.
We used S-TaLiRo~\cite{ALFS11} to generate an input to this model; it generates a gear change signal that is fed to the model.
A gear is chosen from $\{g_1,g_2,g_3,g_4\}$.
The generated gear change is recorded in a timed word.
\LongVersion{%
	The set $W$ consists of 10 timed words;
	the length of each word is 1,467 to 14,657.
}

The pattern PTA $\A$, shown in \cref{figure:patterns:gear}, detects the violation of the following condition:
If the gear is changed to~1, it should not be changed to 2 within $\param$ seconds.
This condition is related to the requirement $\phi^{\mathit{AT}}_5$ proposed in~\cite{HAF14} (the nominal value for~$\param$ in~\cite{HAF14} is~2).

We tabulate our experiments in \cref{table:gear}.
We give from left to right the length of the timed word in terms of actions and time, then the data for \PTPM{} (the number of symbolic states explored, the number of (symbolic) matches found, the parsing time and the computation time excluding parsing) and for \PTPMopt{} (number of symbolic states explored and computation time) using \imitator{}.
The parsing time for \PTPMopt{} is almost identical to~\PTPM{} and is therefore omitted.

\begin{table*}[tb]
	\centering
	\scriptsize
	
	\begin{tabular}{| r | r | r | r | r | r | r | r |}
		\hline
		\startMultiCellHeader{2}{Model} & \multiCellHeader{4}{\PTPM{}} & \multiCellHeader{2}{\PTPMopt}\\
		\cellHeader{Length} & \cellHeader{Time frame} & \cellHeader{States} & \cellHeader{Matches} & \cellHeader{Parsing (s)} & \cellHeader{Comp.\ (s)} & \cellHeader{States} & \cellHeader{Comp.\ (s)}
		\\
		\hline
		1,467 & 1,000 & 4,453 & 379 & 0.02 & 1.60 & 3,322 & 0.94
		\\
		\hline
		2,837 & 2,000 & 8,633 & 739 & 0.33 & 2.14 & 6,422 & 1.70
		\\
		\hline
		4,595 & 3,000 & 14,181 & 1,247 & 0.77 & 3.63 & 10,448 & 2.85
		\\
		\hline
		5,839 & 4,000 & 17,865 & 1,546 & 1.23 & 4.68 & 13,233 & 3.74
		\\
		\hline
		7,301 & 5,000 & 22,501 & 1,974 & 1.94 & 5.88 & 16,585 & 4.79
		\\
		\hline
		8,995 & 6,000 & 27,609 & 2,404 & 2.96 & 7.28 & 20,413 & 5.76
		\\
		\hline
		10,316 & 7,000 & 31,753 & 2,780 & 4.00 & 8.38 & 23,419 & 6.86
		\\
		\hline
		11,831 & 8,000 & 36,301 & 3,159 & 5.39 & 9.75 & 26,832 & 7.87
		\\
		\hline
		13,183 & 9,000 & 40,025 & 3,414 & 6.86 & 10.89 & 29,791 & 8.61
		\\
		\hline
		14,657 & 10,000 & 44,581 & 3,816 & 8.70 & 12.15 & 33,141 & 9.89
		\\
		\hline
	\end{tabular}

	\caption{Experiments: \textsc{Gear}}
	\label{table:gear}
\end{table*}

The corresponding chart is given in \cref{figure:experiments:charts:gear} (\PTPM{} is given in plain black, and \PTPMopt{} in red dashed).
\PTPMopt{} brings a gain in terms of memory (symbolic states) of about 25\,\%, while the gain in time is about~20\,\%.

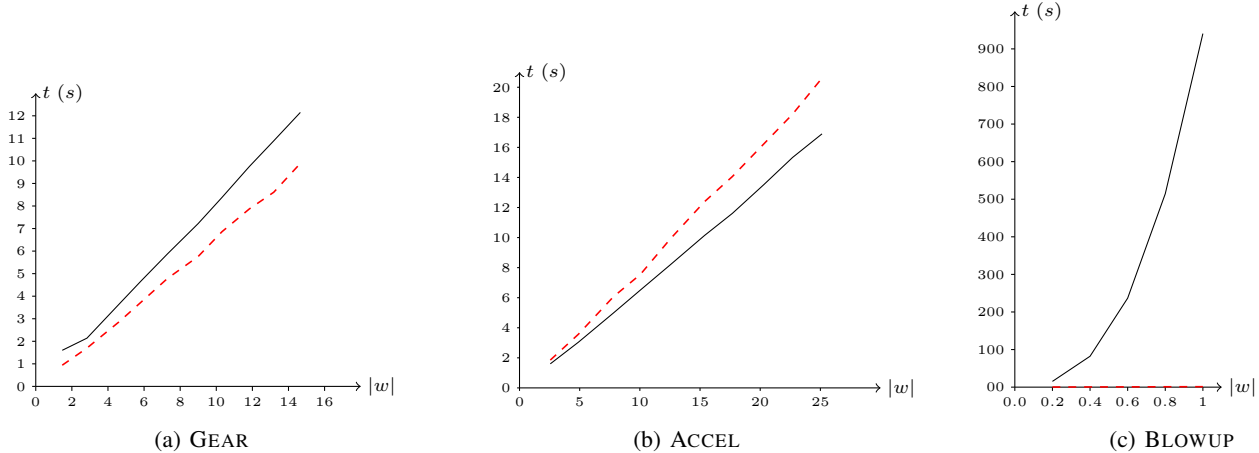
\begin{figure*}[t]
	\scriptsize

	\begin{subfigure}[b]{.3\textwidth}

	\begin{tikzpicture}[scale=0.3, xscale=.8]
		\draw[->] (0.0, 0.0) --++ (18.0, 0.0) node[right]{$|\word|$};
		\draw[->] (0.0, 0.0) --++ (0.0, 13.0) node[right]{$t$ $(s)$};
		
		\foreach \x in {0, 2, ..., 16} 
			\draw (\x, 0) -- (\x, -.2)node [below] {\tiny{$\x$}};
		\foreach \x in {0, 1, ..., 12} 
			\draw (0, \x) -- (-.2, \x) node [left] {\tiny{$\x$}};

		\draw[-]
			(1.467, 1.60)
			--
			(2.837, 2.14)
			--
			(4.595, 3.63)
			--
			(5.839, 4.68)
			--
			(7.301, 5.88)
			--
			(8.995, 7.22)
			--
			(10.316, 8.38)
			--
			(11.831, 9.75)
			--
			(13.183, 10.89)
			--
			(14.657, 12.15)
		;
		
		\draw[PTPMOPT]
			(1.467, 0.94)
			--
			(2.837, 1.70)
			--
			(4.595, 2.85)
			--
			(5.839, 3.74)
			--
			(7.301, 4.79)
			--
			(8.995, 5.76)
			--
			(10.316, 6.86)
			--
			(11.831, 7.87)
			--
			(13.183, 8.61)
			--
			(14.657, 9.89)
		;
	\end{tikzpicture}
	
	\caption{\textsc{Gear}}
	\label{figure:experiments:charts:gear}
	\end{subfigure}
	\hspace{1cm}
	%
	\begin{subfigure}[b]{.3\textwidth}

	\begin{tikzpicture}[scale=0.2, xscale=.8]
		\draw[->] (0.0, 0.0) --++ (30.0, 0.0) node[right]{$|\word|$};
		\draw[->] (0.0, 0.0) --++ (0.0, 21.0) node[right]{$t$ $(s)$};
		
		\foreach \x in {0, 5, ..., 25} 
			\draw (\x, 0) -- (\x, -.2)node [below] {\tiny{$\x$}};
		\foreach \x in {0, 2, ..., 20} 
			\draw (0, \x) -- (-.2, \x) node [left] {\tiny{$\x$}};
		
		\draw[-]
			(2.559, 1.60)
			--
			(4.894, 3.04)
			--
			(7.799, 4.98)
			--
			(10.045, 6.51)
			--
			(12.531, 8.19)
			--
			(15.375, 10.14)
			--
			(17.688, 11.61)
			--
			(20.299, 13.52)
			--
			(22.691, 15.33)
			--
			(25.137, 16.90)
		;
		
		\draw[PTPMOPT]
			(2.559, 1.85)
			--
			(4.894, 3.57)
			--
			(7.799, 6.06)
			--
			(10.045, 7.55)
			--
			(12.531, 9.91)
			--
			(15.375, 12.39)
			--
			(17.688, 14.06)
			--
			(20.299, 16.23)
			--
			(22.691, 18.21)
			--
			(25.137, 20.61)

		;
	\end{tikzpicture}
	
	\caption{\textsc{Accel}}
	\label{figure:experiments:charts:accel}
	\end{subfigure}
	%
	\hspace{1cm}
	%
	\begin{subfigure}[b]{.3\textwidth}

	\begin{tikzpicture}[scale=0.05, xscale=5]
		\draw[->] (0.0, 0.0) --++ (11.0, 0.0) node[right]{$|\word|$};
		\draw[->] (0.0, 0.0) --++ (0.0, 100.0) node[right]{$t$ $(s)$};
		
		\foreach \x in {0, 2, ..., 8} 
			\draw (\x, 0) -- (\x, -1)node [below] {\tiny{$0.\x$}};
		\draw (10, 0) -- (10, -1)node [below] {\tiny{$1$}};
		\foreach \x in {0, 10, ..., 90} 
			\draw (0, \x) -- (-.2, \x) node [left] {\tiny{$\x{}0$}};
		
		\draw[-]
			(2, 1.53)
			--
			(4, 8.22)
			--
			(6, 23.68)
			--
			(8, 51.46)
			--
			(10, 94.07)
		;
		
		\draw[PTPMOPT]
			(2, 0.024)
			--
			(4, 0.049)
			--
			(6, 0.071)
			--
			(8, 0.105)
			--
			(10, 0.124)
		;
	\end{tikzpicture}
	
	\caption{\textsc{Blowup}}
	\label{figure:experiments:charts:blowup}
	\end{subfigure}
	
	\caption{Experiments: charts ($x$-scale $\times 1,000$)}
	\label{figure:experiments:charts}
\end{figure*}

\subsection{\textsc{Accel}}\label{ss:XP:accel}

The $W$ of benchmark $\textsc{Accel}$ is also constructed from the
Simulink model of the automated transmission
system~\cite{HAF14}.  For this benchmark, the
(discretized) value of three state variables are recorded in $W$:
engine RPM (discretized to ``high'' and ``low'' with a certain
threshold), velocity (discretized to ``high'' and ``low'' with a
certain threshold), and 4 gear positions.
We used S-TaLiRo~\cite{ALFS11} to generate a input sequence of gear change.
\LongVersion{%
Our set $W$ consists of 10 timed words;
the length of each word is 2,559 to 25,137.
}

The pattern PTA $\A$ of this benchmark is shown in \cref{figure:patterns:accel}.
This pattern matches a part of a
timed word that violates the following condition: If a gear changes
from 1 to 2, 3, and~4 in this order in $\param$~seconds and engine RPM
becomes large during this gear change, then the velocity of the car
must be sufficiently large in one second.
This condition models the requirement $\phi^{\mathit{AT}}_8$ proposed in~\cite{HAF14} (the nominal value for~$\param$ in~\cite{HAF14} is~10).

\begin{table*}[tb]
	\centering
	\scriptsize
	
	\begin{tabular}{| r | r | r | r | r | r | r | r |}
		\hline
		\startMultiCellHeader{2}{Model} & \multiCellHeader{4}{\PTPM{}} & \multiCellHeader{2}{\PTPMopt}\\
		\cellHeader{Length} & \cellHeader{Time frame} & \cellHeader{States} & \cellHeader{Matches} & \cellHeader{Parsing (s)} & \cellHeader{Comp.\ (s)} & \cellHeader{States} & \cellHeader{Comp.\ (s)}
		\\
		\hline
		2,559 & 1,000 & 6,504 & 2 & 0.27 & 1.60 & 6,502 & 1.85
		\\
		\hline
		4,894 & 2,000 & 12,429 & 2 & 0.86 & 3.04 & 12,426 & 3.57
		\\
		\hline
		7,799 & 3,000 & 19,922 & 7 & 2.21 & 4.98 & 19,908 & 6.06
		\\
		\hline
		10,045 & 4,000 & 25,520 & 3 & 3.74 & 6.51 & 25,514 & 7.55
		\\
		\hline
		12,531 & 5,000 & 31,951 & 9 & 6.01 & 8.19 & 31,926 & 9.91
		\\
		\hline
		15,375 & 6,000 & 39,152 & 7 & 9.68 & 10.14 & 39,129 & 12.39
		\\
		\hline
		17,688 & 7,000 & 45,065 & 9 & 13.40 & 11.61 & 45,039 & 14.06
		\\
		\hline
		20,299 & 8,000 & 51,660 & 10 & 18.45 & 13.52 & 51,629 & 16.23
		\\
		\hline
		22,691 & 9,000 & 57,534 & 11 & 24.33 & 15.33 & 57,506 & 18.21
		\\
		\hline
		25,137 & 10,000 & 63,773 & 13 & 31.35 & 16.90 & 63,739 & 20.61 
		\\
		\hline
	\end{tabular}

	\caption{Experiments: \textsc{Accel}}
	\label{table:accel}
\end{table*}

Experiments are tabulated in \cref{table:accel}.
The corresponding chart is given in \cref{figure:experiments:charts:accel}.
This time, \PTPMopt{} brings almost no gain in terms of states, and a loss of speed of about 15 to~20\,\%, which may come from the additional polyhedra inclusion checks to test whether a branch is less good than the current optimum.

\subsection{\textsc{Blowup}}\label{ss:XP:blowup}

\ea{= looping case study studied in Tokyo with Masaki}

As a third experiment, we considered an original (toy) benchmark that acts as a worst case situation for parametric timed pattern matching.
Consider the PTA pattern in \cref{figure:patterns:blowup}, and assume a timed word consisting in an alternating sequence of ``$\styleact{a}$'' and ``$\styleact{b}$''.
Observe that the time from the pattern beginning (that resets~$\styleclock{x}$) to the end is exactly $\styleparam{\param_1}$ time units.
Also observe that the duration of the loop through $\loc_2$ and~$\loc_3$ has a duration in $[\styleparam{\param_3} , \styleparam{\param_2})$; therefore, for values sufficiently small of $\styleparam{\param_2},\styleparam{\param_3}$, one can always match a larger number of loops.
That is, for a timed word of length $2 n$ alternating between ``$\styleact{a}$'' and ``$\styleact{b}$'', there will be $n$ possible matches from position~0 (with $n$ different parameter constraints), $n-1$ from position 1, and so on, giving a total number of $\frac{n(n+1)}{2}$ matches with different constraints in 5~dimensions.

Note that this worst case situation is not specific to our approach, but would appear independently of the approach chosen for parametric timed pattern matching.
\LongVersion{

}%
We generated random timed words of various sizes, all alternating exactly between ``$\styleact{a}$'' and ``$\styleact{b}$''.
\LongVersion{Our set $W$ consists of 5 timed words of length from 200 to~1,000.}


\todo{didn't include the final pairwise reduce in timings}

\begin{table*}[tb]
	\centering
	\scriptsize
	
	\begin{tabular}{| r | r | r | r | r | r | r | r |}
		\hline
		\startMultiCellHeader{2}{Model} & \multiCellHeader{4}{\PTPM{}} & \multiCellHeader{2}{\PTPMopt}\\
		\cellHeader{Length} & \cellHeader{Time frame} & \cellHeader{States} & \cellHeader{Matches} & \cellHeader{Parsing (s)} & \cellHeader{Comp.\ (s)} & \cellHeader{States} & \cellHeader{Comp.\ (s)}
		\\
		\hline
		200 & 101 & 20,602 & 5,050 & 0.01 & 15.31 & 515 & 0.24
		\\
		\hline
		400 & 202 & 81,202 & 20,100 & 0.02 & 82.19 & 1,015 & 0.49
		\\
		\hline
		600 & 301 & 181,802 & 45,150 & 0.03 & 236.80 & 1,515 & 0.71
		\\
		\hline
		800 & 405 & 322,402 & 80,200 & 0.05 & 514.57 & 2,015 & 1.05
		\\
		\hline
		1,000 & 503 & 503,002 & 125,250 & 0.06 & 940.74 & 2,515 & 1.24 
		\\
		\hline
	\end{tabular}

	\caption{Experiments: \textsc{Blowup}}
	\label{table:blowup}
\end{table*}


Experiments are tabulated in \cref{table:blowup}.
The corresponding chart is given in \cref{figure:experiments:charts:blowup}.
\PTPM{} becomes clearly non-linear as expected.
This time, \PTPMopt{} brings a dramatic gain in both memory and time; even more interesting, \PTPMopt{} remains perfectly linear.

\subsection{Discussion}

A first positive outcome is that our method is effectively able to perform parametric pattern matching on words of length up to several dozens of thousands, and is able to output results in the form of several dozens of thousands of symbolic matches in several dimensions, in just a few seconds.
%
Another positive outcome is that \PTPM{} is perfectly linear in the size of the input word for \textsc{Gear} and \textsc{Accel}: this was expected as these examples are linear, in the sense that the number of states explored by \PTPM{} is linear as these patterns feature no loops.

Note that the parsing time is not linear, but it could be highly improved: due to the relatively small size of the models usually treated by \imitator{}, this part was never properly optimized, and it contains several quadratic syntax checking functions.

The performances do not completely allow yet for an \emph{online} usage in the current version of our algorithm and implementation (in~\cite{WHS17}, we pushed the \textsc{Accel} case study for timed words of length up to~17,280,002).
A possible direction is to perform \LongVersion{and on-the-fly computation of the parametric zone graph, more precisely to do }an on-the-fly parsing of the timed word automaton; this will allow \imitator{} to keep in memory a single location at a time (instead of up to 25,137 in our experiments).

Finally, although this is not our original motivation, we believe that, if we are only interested in \emph{robust} pattern matching, \ie{} non-parametric pattern matching but with an allowed deviation (``guard enlargement'') of the pattern automaton, then using the efficient 1-dimensional parameterized DBMs of~\cite{Sankur15} would probably be an interesting alternative: indeed, in contrast to classical parameterized DBMs~\cite{HRSV02} (that are made of a matrix and a parametric polyhedron), the structure of~\cite{Sankur15} only needs an $\ClockCard\times\ClockCard$ matrix with a single parameter, and seems particularly efficient.

\begin{figure*}[t]
	\begin{subfigure}[b]{.49\textwidth}
		\centering
		
		\includegraphics[width=.55\textwidth]{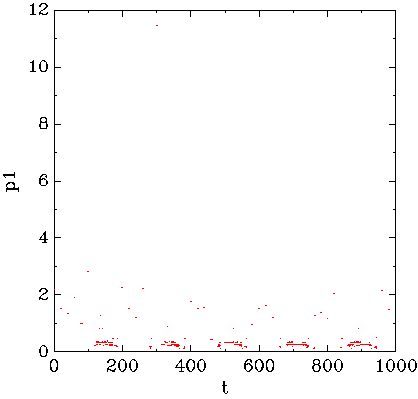}

		\caption{Projection onto $t$ and~$\param$}
		\label{figure:projections:gear:t-param}
	\end{subfigure}
	\begin{subfigure}[b]{.49\textwidth}
		\centering
		
		\includegraphics[width=.55\textwidth]{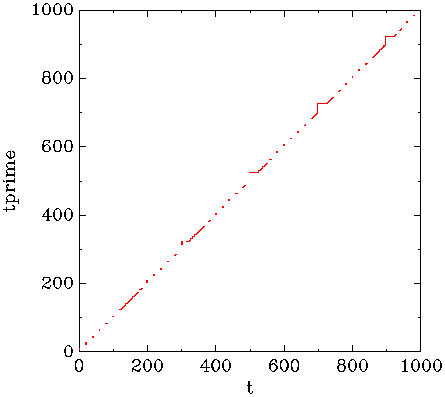}

		\caption{Projection onto $t$ and~$t'$}
	\end{subfigure}

	\caption{Visualizing a large number of matches for \textsc{Gear} ($|\word = 1467|$)}
	\label{figure:projections:gear}
\end{figure*}

\section{Conclusion}\label{section:conclusion}

We proposed a first approach to perform timed pattern matching in the presence of an uncertain specification.
This allows us to synthesize parameter valuations and intervals for which the specification holds on an input timed word.
Our implementation using \imitator{} may not completely allow for \emph{online} timed pattern matching yet, but already gives an interesting feedback in terms of parametric monitoring.
Our second algorithm aiming at finding minimal or maximal parameter valuations is less sensitive to state space explosion.
While our algorithms should be further optimized, we believe they pave the way for a more precise monitoring of real-time systems with an output richer than just timed intervals.

\paragraph*{Future works}
In~\cite{WHS17}, we proposed an approach that can apply for online timed pattern matching.
Its strength relied on the fact that an exhaustive search was not necessary, thanks to a \emph{skipping} mechanism.
Our next work will be to combine our current approach with the skipping mechanism of~\cite{WHS17}.

\todo{data structures}

Another challenge is the interpretation (and the visualization) of the results of parametric timed pattern matching.
While the result of $\PTPMopt$ is natural, the fully symbolic result of~$\PTPM$ remains a challenge to be interpreted; for example, the 125,250 matches for \textsc{Blowup} means the union of 125,250 polyhedra in 5~dimensions.
We give a possible way to visualize such results in \cref{figure:projections:gear} for \textsc{Gear} ($|\word| = 1,467$): in particular, observe in \cref{figure:projections:gear:t-param} that a single point exceeds~3, only a few exceed~2, while the wide majority remain in~$[0,1]$.
This helps to visualize how fast the gear is changed from~1 to~2, and at what timestamps.

Also, exploiting the polarity of parameters, as done in~\cite{ADMN11} or in lower-bound/upper-bound parametric timed automata~\cite{HRSV02}\todo{cite \cite{ALime17}}, may help to improve the efficiency of \PTPMopt{}.

Finally, a natural extension of our work is to study monitoring using more expressive logics such as~\cite{BKMZ15}.

\ea{not sure if I should cite \cite{HPU17}…?}



\ifdefined\VersionAuthor
	\bibliographystyle{alpha} 
	\newcommand{\IJFCS}{International Journal of Foundations of Computer Science}
	\newcommand{\JLAP}{Journal of Logic and Algebraic Programming}
	\newcommand{\LNCS}{LNCS}
	\newcommand{\STTT}{International Journal on Software Tools for Technology Transfer}
	\newcommand{\ToPNoC}{Transactions on Petri Nets and Other Models of Concurrency}
\else
	\bibliographystyle{IEEEtran} 
	\newcommand{\IJFCS}{IJFCS}
	\newcommand{\JLAP}{JLAP}
	\newcommand{\LNCS}{LNCS}
	\newcommand{\STTT}{STTT}
	\newcommand{\ToPNoC}{ToPNoC}
\fi
\bibliography{PTPM}


\AuthorVersion{%

\appendix

\subsection{Handling thousands of polyhedral disjuncts}\label{appendix:polyhedra}

\imitator{} handles non-convex constraints (finite unions of polyhedra); while most case studies solved by \imitator{} in the past handle simple constraints (made of a few disjuncts), the experiments in this manuscript may handle up to \emph{dozens of thousands} of such polyhedra.
We therefore had to disable an inclusion test of a newly computed state into the already computed constraint: this test usually has a very interesting gain but, on our complex polyhedra, it had disastrous impact on the performance, due to the inclusion check of a (simple) new convex polyhedron into a disjunction of dozens of thousands of convex polyhedra.
To disable this check, we added a new option (not set by default) to the master branch of \imitator{}, and used it in all our experiments.

\subsection{Significantly decreasing \imitator{} computation time}\label{appendix:time}

In a preliminary version of our experiments, we obtained a computation time of up to 71 times slower than in the current version (\eg{} 1201.2 instead of 16.90 for \PTPM{} applied to \textsc{Accel} with an input word of size~25,137).
It turned out that a frequent manual invocation of the OCaml garbage collector was responsible for this much slower time.
Removing this invocation made our experiments much faster, while keeping the usual speed for the collection of benchmarks of \imitator{}.
Recall that the experiments in this paper are unusual as the size of the input model (in terms of number of locations) is an order of magnitude larger than \imitator{}'s historical benchmarks.
}

\end{document}